\newtheorem{theorem}{Theorem}
\newtheorem{corollary}[theorem]{Corollary}
\newtheorem{rem1}[theorem]{Remark}
\newtheorem{lemma}[theorem]{Lemma}
\newtheorem{definition}[theorem]{Definition}
\newtheorem{proposition}[theorem]{Proposition}
\newtheorem{ex1}[theorem]{Example}
\newtheorem{ass1}[theorem]{Assumption}
\newenvironment{remark}{\begin{rem1}\rm}{\end{rem1}}
\newenvironment{assumption}{\begin{ass1}\rm}{\end{ass1}}
\numberwithin{equation}{section}
\numberwithin{theorem}{section}
\newcommand{\N}{\mathbb{N}}
\renewcommand{\P}{\mathbb{P}}
\newcommand{\Q}{\mathbb{Q}}
\newcommand{\R}{\mathbb{R}}
\renewcommand{\S}{\mathbb{S}}
\newcommand{\W}{\mathcal{W}}
\newcommand{\tildeW}{\widetilde{\W}}
\newcommand{\K}{\mathbb{K}}
\newcommand{\lrparen}[1]{\left(#1\right)}
\newcommand{\lparen}[1]{\left(#1\right.}
\newcommand{\rparen}[1]{\left.#1\right)}
\newcommand{\lrsquare}[1]{\left[#1\right]}
\newcommand{\lrcurly}[1]{\left\{#1\right\}}
\newcommand{\Ft}[1]{\mathcal{F}_{#1}}
\newcommand{\Lnp}[3]{L_{#1}^{#2}(#3)}
\newcommand{\LdpF}[1]{\Lnp{#1}{p}{\R^d}}
\newcommand{\LdqF}[1]{\Lnp{#1}{q}{\R^d}}
\newcommand{\LdiF}[1]{\Lnp{#1}{\infty}{\R^d}}
\newcommand{\LdzF}[1]{\Lnp{#1}{0}{\R^d}}
\newcommand{\LdpK}[3]{\Lnp{#2}{#1}{#3}}
\newcommand{\LdiK}[2]{\Lnp{#1}{\infty}{#2}}
\newcommand{\LdoK}[2]{\Lnp{#1}{1}{#2}}
\newcommand{\LpF}[1]{\Lnp{#1}{p}{\R}}
\newcommand{\LqF}[1]{\Lnp{#1}{q}{\R}}
\newcommand{\LiF}[1]{\Lnp{#1}{\infty}{\R}}
\newcommand{\LoF}[1]{\Lnp{#1}{1}{\R}}
\newcommand{\LpK}[3]{\Lnp{#2}{#1}{#3}}
\newcommand{\barLzF}[1]{\LpK{0}{#1}{\R \cup \{\pm\infty\}}}
\newcommand{\E}[1]{\mathbb{E}\lrsquare{#1}}
\newcommand{\EP}[2]{\mathbb{E}^{#1}\lrsquare{#2}}
\newcommand{\Et}[2]{\E{\left.#1 \right| \mathcal{F}_{#2}}}
\newcommand{\EPt}[3]{\EP{#1}{\left.#2 \right| \mathcal{F}_{#3}}}
\newcommand{\EQt}[2]{\EPt{\Q}{#1}{#2}}
\newcommand{\EQnt}[3]{\EPt{\Q_{#1}}{#2}{#3}}
\newcommand{\ERt}[2]{\EPt{\R}{#1}{#2}}
\newcommand{\dQdP}{\frac{d\Q}{d\P}}
\newcommand{\dRdP}{\frac{d\R}{d\P}}
\newcommand{\genseq}[4]{\lrparen{#1_#4}_{#4=#2}^{#3}}
\newcommand{\seq}[1]{\genseq{#1}{0}{T}{t}}
\newcommand{\trans}[1]{#1^{\mathsf{T}}}
\newcommand{\transp}[1]{\trans{\lrparen{#1}}}
\newcommand{\prp}[1]{#1^{\perp}}
\newcommand{\diag}[1]{\operatorname{diag}\lrparen{#1}}
\newcommand{\cl}{\operatorname{cl}}
\newcommand{\co}{\operatorname{co}}
\newcommand{\abs}[1]{\left|#1\right|}
\newcommand{\Pas}{\P\text{-a.s.}}
\DeclareMathOperator*{\esssup}{ess\,sup}
\DeclareMathOperator*{\essmax}{ess\,max}
\DeclareMathOperator*{\essinf}{ess\,inf}
\DeclareMathOperator*{\argessmax}{arg\,ess\,max}
\begin{document}
\title{Time consistency for scalar multivariate risk measures}
\author{Zachary Feinstein \thanks{Stevens Institute of Technology, School of Business, Hoboken, NJ 07030, USA, zfeinste@stevens.edu.} \and Birgit Rudloff \thanks{Vienna University of Economics and Business, Institute for Statistics and Mathematics, Vienna A-1020, AUT, brudloff@wu.ac.at. B. Rudloff acknowledges support from the OeNB anniversary fund, project number 17793.}}
\maketitle
\abstract{
In this paper we present results on dynamic multivariate scalar risk measures, which arise in markets with transaction costs and systemic risk.  Dual representations of such risk measures are presented.  These are then used to obtain the main results of this paper on time consistency; namely, an equivalent recursive formulation of multivariate scalar risk measures to multiportfolio time consistency.  We are motivated to study time consistency of multivariate scalar risk measures as the superhedging risk measure in markets with transaction costs (with a single eligible asset) \cite{JK95,RZ11,LR11} does not satisfy the usual scalar concept of time consistency.  In fact, as demonstrated in \cite{FR18-1asset}, scalar risk measures with the same scalarization weight at all times would not be time consistent in general.  The deduced recursive relation for the scalarizations of multiportfolio time consistent set-valued risk measures provided in this paper requires consideration of the entire family of scalarizations.  In this way we develop a direct notion of a ``moving scalarization'' for scalar time consistency that corroborates recent research on scalarizations of dynamic multi-objective problems \cite{KMZ17,KR18}.
}

\section{Introduction}
\label{sec:intro}

The organization of this paper is as follows.  First, we complete this introduction with a review of relevant literature (Section~\ref{sec:lit}) and a detailed motivation for our study (Section~\ref{sec:mot}). In Section~\ref{sec:prelim} we present the mathematical setting we will utilize throughout this paper.  Much of this notation is comparable to that from the set-valued risk measure literature.  In Section~\ref{sec:setvalued} we summarize basic definitions related to set-valued risk measures which will be used later in this work.  We then consider the scalarizations in Section~\ref{sec:scalar}.  Many of the results on the primal and dual representation of multivariate scalar risk measures come from~\cite{FR13-survey,FR15-supermtg}, but they are repeated here for completeness of this work.  The main results of this paper are found in Section~\ref{sec:mptc}.  In that section we present a recursive formulation on the family of scalarized risk measures that is equivalent to multiportfolio time consistency.  In particular, we wish to highlight that this recursive formulation naturally provides the notion of a ``moving scalarization'' for time consistency as discussed in \cite{KR18} for the mean-risk portfolio optimization problem.  By considering multiportfolio time consistency, we can immediately demonstrate that this recursive relation is satisfied by the usual examples; in particular we give details on the superhedging risk measure and composed risk measures in Section~\ref{sec:examples}. Most proofs are relegated to the Appendix. Furthermore, an auxiliary dual representation for the multivariate scalar risk measures, used extensively in the proofs of the main results, is provided in Appendix~\ref{sec:dual-aux}.

\subsection{Literature review}\label{sec:lit}
Coherent risk measures were introduced in an axiomatic manner in the seminal paper by Artzner, Delbaen, Eber, and Heath \cite{AD99}.  They work in a static and univariate setting in which a contingent claim is defined by its value in some num\'eraire and the risk measure outputs the minimal capital necessary to compensate for the risk of the initial claim.  Convex risk measures were introduced as a generalization of the coherent risk measures in \cite{FS02,FG02} for the static univariate setting; such risk measures retain the same interpretation and the notion that diversification does not increase risk.  

With the introduction of time and information, i.e.\ with the introduction of a filtration $\seq{\mathcal{F}}$, it is natural to consider dynamic risk measures.  In a univariate setting, such functions provide the minimal capital required to compensate for the risk of a portfolio conditional on the information available at time $t$.  With the introduction of a filtration and time dynamics, it is natural and vital to consider the manner in which the risk of a portfolio or contingent claim propagates in time.  This propagation of risk over times has significant implications on risk management.  One condition for how risks propagate in time is called (strong) time consistency; it consists of the condition that if one claim is riskier than another in the future, that claim should be riskier for all prior times as well.  Time consistency for dynamic univariate risk measures has been studied in~\cite{AD07,R04,DS05,CDK06,RS05,BN04,BN08,BN09,FP06,CS09,CK10,AP10,FS11} in discrete time and~\cite{FG04,D06,DPRG10} in continuous time.

In this work, we focus on multivariate risk measures.  This setting has been studied in two contexts in the literature.  It was originally introduced for studying markets with frictions; in such a market the liquidation of a portfolio or claim is not reversible, i.e.\ the liquidation value is not enough to repurchase the same asset.  Risk measures for markets with transaction costs were studied in a static setting with a set-valued framework in~\cite{JMT04,HHH07,HR08,HH10,HHR10}.  In a dynamic framework for markets with transaction costs, time consistency of set-valued risk measures was introduced and studied in~\cite{FR12,TL12}.  This notion of time consistency is called multiportfolio time consistency; this property has been studied in~\cite{FR12b,FR13-survey,FR15-supermtg,CH17}.  The computation of static and dynamic set-valued risk measures was considered in~\cite{FR14-alg,LR11}.  The second context in which multivariate risk measures have recently been considered is for studying systemic risk.  Systemic risk measures were defined in a set-valued framework in~\cite{FRW15,BFFM15} with dual representations deduced in~\cite{ararat2020dual,arduca2019dual}. 

The focus of this work is on dynamic scalar multivariate risk measures with a general number of eligible assets.  We refer to \cite{FR13-survey,FR15-supermtg} for some prior results on dynamic scalar multivariate risk measures with multiple eligible assets.  Where appropriate, we will summarize those results in the body of this paper as well.
The special case of multivariate risk measures with a single eligible asset has been studied in~\cite{FR18-1asset}.
We note, as provided in \cite[Example~2.26]{FR13-survey}, that scalar risk measures in frictionless markets with either a single eligible asset \cite{FS02,AD99} or multiple eligible assets \cite{FMM13,BMM18,WH14,ADM09,FS06,K09,Sc04} can also be considered as scalar multivariate risk measures.
Scalar multivariate functions have been considered in the context of systemic risk in, e.g., \cite{KOZ16,BC13}.
Our focus in this paper is of the dynamic version of the multivariate risk measures with multiple eligible assets which are intimately related to the efficient allocation rules of \cite{FRW15} and systemic risk measures of \cite{BFFM15}.  Though the set-valued risk measures provide the full risk profile, these scalarizations are typically what a risk manager or regulator would report as they describe the optimal capital allocations.

\subsection{Motivation}\label{sec:mot}
From the financial literature, we are specifically motivated to study the problem of scalar multivariate risk measures with multiple eligible assets from the relation they have with systemic risk measures. 
As detailed in \cite{FRW15,BFFM15,ACDP15}, systemic risk measures are naturally multivariate with capital allocations for each bank in the financial system being dependent on the others.  It is, thus, natural to consider finding the collection of capital allocations for the firms that require the minimal (weighted) total capital.  This aggregate capital, as defined by the efficient allocation rules of \cite{FRW15}, is exactly a scalar multivariate risk measure.  Though these works have considered multivariate systemic risk measures in a static setting only, we are motivated by the consideration of systemic risk in a dynamic setting.

Furthermore, we wish to study a time consistency property that is satisfied by the typical examples, e.g.\ the superhedging risk measure.  This is in contrast to many prior works on time consistency for scalar multivariate measures~\cite{HMBS16,KOZ15} whose time consistency property does not include the superhedging risk measure as an example.  
We refer the reader to~\cite{bielecki2017survey} for a survey of other, weaker, notions of time consistency.

Though a notion of time consistency exists for univariate risk measures, a number of recent works on scalarizations of multi-objective problems \cite{KMZ17,KR18,FR18-1asset} indicate that this original notion of time consistency is too strong. We also refer to~\cite{weber2006distribution,roorda2007time,tutsch2008update} which, similarly, indicate the original notion of time consistency is often too strong for certain other classes of scalar risk measures.  In fact, the superhedging price is a typical example for time consistency from the literature and \cite{FR18-1asset} demonstrates that the univariate definition of time consistency is not satisfied for such a risk measure in markets with frictions.  However, multiportfolio time consistency, which was introduced for set-valued multivariate risk measures \cite{FR12,TL12}, holds for many examples of set-valued risk measures including the superhedging risk measure.  We are thus motivated to ask if an equivalent property of multiportfolio time consistency exists for the scalarized multivariate risk measures.  If so, we propose that such a property would provide a more appropriate concept of time consistency for scalar multivariate risk measures. We will show that an equivalent property exists.  In fact, this equivalent property is a recursive formulation for the scalar multivariate risk measures which requires the entire family of scalarizations.  To the best of our knowledge, this recursive formulation utilizing the entire family of scalarizations has never before been proposed or studied.  As such, these new results provide a foundation to study the ``moving scalarizations'' for (scalar) dynamic programs which allow for Bellman's principle to hold, e.g.\ for the dynamic mean-risk portfolio optimization problem~\cite{KMZ17,KR18}.

As highlighted in our literature review, time consistency of multivariate risk measures with a single eligible asset was studied in~\cite{FR18-1asset}.  In that single eligible asset setting, multiportfolio time consistency generally does \emph{not} hold and thus a weaker time consistency property was proposed and studied.  
The starting point of the present paper is different as we allow for multiple eligible assets. Now assuming multiportfolio time consistency for the underlying set-valued risk measure, we study its equivalent characterization for the scalarizations.

\section{Preliminaries}
\label{sec:prelim}
Throughout this work we will consider the filtered probability space $\lrparen{\Omega,\Ft{},\seq{\mathcal{F}},\P}$ in either discrete or continuous time.  Assume that this filtered probability space satisfies the usual conditions with $\Ft{0}$ the trivial sigma algebra and $\Ft{} = \Ft{T}$. Define $|\cdot|_n$ to be an arbitrary norm in $\R^n$ for $n \in \N$.   As such, the mapping $|\cdot|_1$ is the absolute value operator.
Throughout this work we will denote the set of equivalence classes of $\Ft{t}$-measurable functions taking value in the (random) set $D$ by $\LdpK{0}{t}{D} := \LdpK{0}{}{\Omega,\Ft{t},\P;D}$.  That is, $X \in \LdpK{0}{t}{D}$ if $X: \omega \in \Omega \mapsto D[\omega] \subseteq \R^n$ for some $n \in \N$ is $\Ft{t}$-measurable.  If $D \subseteq \R^n$ is constant, the set $\LdpK{0}{t}{D}$ is defined to be equivalent to $\LdpK{0}{t}{\tilde D}$ where $\tilde D[\omega] := D$ for every $\omega \in \Omega$.  
Further, for $p \in [1,+\infty]$, we will define $\LdpK{p}{t}{D} \subseteq \LdpK{0}{t}{D}$ to be those random variables $X \in \LdpK{0}{t}{D}$ with finite $p$-norm, i.e.\ such that $\|X\|_p = \lrparen{\int_\Omega |X(\omega)|_n^p d\P}^{\frac{1}{p}} < +\infty$ for $p \in [1,+\infty)$ and $\|X\|_{\infty} = \esssup_{\omega \in \Omega} |X(\omega)|_n < + \infty$ for $p = +\infty$.  We denote $\LdpK{p}{}{D} := \LdpK{p}{T}{D}$ for any $p \in \{0\} \cup [1,+\infty]$.  

Often in this paper we wish to consider the space of $\Ft{t}$-measurable random vectors $\LdzF{t}$ and the norm bounded random vectors $\LdpF{t}$.  As defined above, $\LdzF{t}$ is the linear space of equivalence classes of $\Ft{t}$-measurable mappings $X: \Omega \to \R^d$.  For $p \in [1,+\infty]$, $\LdpF{t} \subseteq \LdzF{t}$ denotes the space of the $\Ft{t}$-measurable random vectors $X: \Omega \to \R^d$ such that $\|X\|_p < +\infty$.
In fact for any choice $p \in \{0\} \cup [1,+\infty]$, a $d$-dimensional random vector $X$ is an element of $\LdpF{t}$ if and only if its components $X_1,...,X_d$ are elements of $\LpF{t}$.

Throughout this work, we will restrict ourselves to $p \in [1,+\infty]$.  For $p \in [1,+\infty)$, we will endow $\LdpF{t}$ with the norm topology (respectively $\LpF{t}$ with the norm topology) for all times $t$.  For $p = +\infty$, we will endow $\LdiF{t}$ with the weak* topology (respectively $\LiF{t}$ with the weak* topology) for all times $t$.  In this way we can consider the associated dual space $\LdqF{t}$ (respectively $\LqF{t}$) defined by $\frac{1}{p} + \frac{1}{q} = 1$ for any choice of $p \in [1,+\infty]$.  For the remainder of this work we will define $q \in [1,+\infty]$ so that it denotes the dual space associated with $p$.

Additionally, as we will consider two specific operators throughout this work, we wish to define both at this time.  First, the indicator function for some $D \in \Ft{}$ is denoted by $1_D: \Omega \to \{0,1\}$ and defined as
\[1_D(\omega) = \begin{cases}1 &\text{if } \omega \in D\\ 0 &\text{if } \omega \not\in D\end{cases}.\]
Second, the summation of sets is provided by the Minkowski addition, i.e.\ $A + B = \{a + b \; | \; a \in A, \; b \in B\}$ for any sets $A,B$ in the same linear space.

As detailed in the introduction, we are motivated to study multivariate risk measures by markets with transaction costs as well as systemic risk.  The ``portfolios'' have slightly different meaning in these two settings:
\begin{itemize}
\item In markets with transaction costs, consider a market with $d$ assets.  Portfolios are denoted in their ``physical units'' in each of the $d$ assets.  Due to the transaction costs, any fixed num\'eraire would be unable to uniquely define a portfolio (e.g., mark-to-market or liquidation value).  We refer to \cite{K99,S04,KS09} for prior works that discuss the use of physical portfolios.  Specifically, a portfolio $X \in \LdpF{t}$ has $X_i$ units of asset $i$ at time $t$.
\item In a systemic risk setting, consider a system with $d$ banks or financial firms. Portfolios are denoted by the wealth of each firm due to their primary business activities as a vector.  Specifically, a portfolio $X \in \LdpF{t}$ means that firm $i$ has wealth $X_i$ at time $t$.
\end{itemize}

Consider a linear subspace $M \subseteq \R^d$ which denotes the set of portfolios which can be used to compensate for the risk of a portfolio.  We will call this space the ``eligible portfolios''. By construction $M_t := \LdpK{p}{t}{M}$  is a closed linear subspace of $\LdpF{t}$, see Section~5.4 and Proposition~5.5.1 in \cite{KS09}.  
We will denote $M_+ := M \cap \R^d_+$ to be the nonnegative elements of $M$.  We will assume that $M_+ \neq \{0\}$, i.e.\ $M_+$ is nontrivial.  We will denote $M_{t,+} := M_t \cap \LdpK{p}{t}{\R^d_+} = \LdpK{p}{t}{M_+}$ to be the nonnegative elements of $M_t$ and $M_{t,-} := -M_{t,+}$ to be the nonpositive elements of $M_t$.

\begin{assumption}
\label{ass:M}
Throughout this work we will assume that only $m \in \{1,...,d\}$ of the ``assets'', and no other portfolios, are eligible.  In the setting of markets with transaction costs, this corresponds to the setting of a few reserve currencies, e.g.\ US Dollars, Euros, and Yen, are eligible assets.  From a systemic risk perspective, this corresponds to the setting in which systemic capital requirements are only placed on those firms deemed systemically important.  Without loss of generality, we will assume that these are the first $m$ assets, i.e., $M = \R^m \times \{0\}^{d-m}$.
\end{assumption}

In considering set-valued functions, the image space of interest is the set of upper sets which are denoted by $\mathcal{P}\lrparen{M_t;M_{t,+}}$.  Here the space of upper sets is defined as $\mathcal{P}\lrparen{\mathcal{Z};C} := \lrcurly{D \subseteq \mathcal{Z} \; | \; D = D + C}$ for some vector space $\mathcal{Z}$ and an ordering cone $C \subset \mathcal{Z}$.  Additionally, often the space of interest is the set of upper closed convex sets, which are denoted by $\mathcal{G}(\mathcal{Z};C) := \lrcurly{D \subseteq \mathcal{Z} \; | \; D = \cl\co\lrparen{D + C}} \subseteq \mathcal{P}(\mathcal{Z};C)$.

We conclude this section by describing the set of dual variables that will be utilized throughout this work.  This construction comes from the set-valued biconjugation theory as it was used in \cite{FR12,FR12b}.
Define $\mathcal{M}$ to be the space of probability measures absolutely continuous with respect to $\P$.  The space of $d$-dimensional probability measures absolutely continuous with respect to $\P$ is thus denoted by $\mathcal{M}^d$.
For any $\Q \in \mathcal{M}$, we will consider the $\P$-almost sure version of the $\Q$-conditional expectation.  This version of the conditional expectation is defined in, e.g., \cite{CK10,FR12}.  Briefly, define the conditional expectation for any $X \in \LpF{}$ by
\[\EQt{X}{t} := \Et{\bar{\xi}_{t,T}(\Q)X}{t}\]
where 
\[\bar{\xi}_{s,\sigma}(\Q)[\omega] := \begin{cases}\frac{\Et{\dQdP}{\sigma}(\omega)}{\Et{\dQdP}{s}(\omega)} & \text{if } \Et{\dQdP}{s}(\omega) > 0 \\ 1 &\text{else} \end{cases}\]
for every $\omega \in \Omega$.  For vector-valued probability measures $\Q \in \mathcal{M}^d$, the $\Q$-conditional expectation is defined component-wise, i.e., $\EQt{X}{t} = \transp{\EQnt{1}{X_1}{t},\dots,\EQnt{d}{X_d}{t}}$ for any $X \in \LdpF{}$. 
By construction, for any $\Q \in \mathcal{M}$ and any times $0 \leq t \leq s \leq \sigma \leq T$, it holds that $\dQdP = \bar{\xi}_{0,T}(\Q)$ and $\bar{\xi}_{t,\sigma}(\Q) = \bar{\xi}_{t,s}(\Q) \bar{\xi}_{s,\sigma}(\Q)$ almost surely.

From this space of probability measures, we can define the set of dual variables at time $t$ or stepped from time $t$ to $s$ from the set-valued biconjugation theory \cite{H09}.  In fact, we will provide a slightly more restrictive set of dual variables than previously considered so that the conditions are all taken almost surely.  As such the set of dual variables are given by
\begin{align*}
\W_t &:= \lrcurly{(\Q,w) \in \mathcal{M}^d \times \LdpK{q}{t}{M_+^+ \backslash \prp{M}} \; | \; w_t^T(\Q,w) \in \LdpK{q}{}{\R^d_+}},\\
\W_{t,s} &:= \lrcurly{(\Q,w) \in \mathcal{M}^d \times \LdpK{q}{t}{M_+^+ \backslash \prp{M}} \; | \; w_t^s(\Q,w) \in M_{s,+}^+} \supseteq \W_t.
\end{align*}
Above, and throughout this work, we define \[w_t^s(\Q,w) = \diag{w}\xi_{t,s}(\Q)\] for any times $0 \leq t \leq s \leq T$, where $\xi_{s,\sigma}(\Q) := \transp{\bar{\xi}_{s,\sigma}(\Q_1),\dots,\bar{\xi}_{s,\sigma}(\Q_d)}$, and $\diag{w}$ denotes the diagonal matrix with the components of $w$ on the main diagonal.
Additionally, we employ the notation $\prp{M} = \lrcurly{v \in \R^d \; | \; \trans{v}u = 0 \; \forall u \in M}$ to denote the orthogonal space of $M$, $\prp{M_t} = \LdpK{q}{t}{\prp{M}}$ to denote the orthogonal space of $M_t$, and $C^+=\lrcurly{v \in \mathcal{Z}^* \; | \; \langle v , u \rangle \geq 0 \; \forall u \in C}$ to denote the positive dual cone of a cone $C\subseteq \mathcal{Z}$ where $\mathcal{Z}^*$ is the topological dual space and $\langle \cdot,\cdot \rangle$ is the bilinear operator.

We note that, in the following results for a conditional risk measure at time $t$, we can consider those probability measures $\Q \in \mathcal{M}^d$ that are equal to $\P$ on $\Ft{t}$, i.e., $\Q_i(D) = \P(D)$ for every $D \in \Ft{t}$ and $i \in \{1,...,d\}$.  This is due to the construction of the $\P$-almost sure version of the $\Q$-conditional expectation.  However, for ease of notation we will refer only to the space of absolutely continuous probability measures.

\section{Set-valued risk measures}
\label{sec:setvalued}

In this section,  we provide a summary of basic definitions related to set-valued risk measures which will be used within this work. 

The set-optimization approach to dynamic risk measures is studied in \cite{FR12,FR12b}, where set-valued risk measures \cite{HH10,HHR10} were extended to the dynamic case.
A benefit of this method is that dual representations (see Theorem~4.7 in~\cite{FR12}) are obtained by a direct application of
the set-valued duality developed in~\cite{H09}, which allowed for the first time to study not only conditional coherent, but also convex set-valued risk measures.

In this setting we consider risk measures that map a portfolio vector into the
complete lattice $\mathcal{P}\lrparen{M_t;M_{t,+}}$ of upper sets.  The intuition behind this is that if an eligible portfolio, i.e., an element of
$M_t$, covers the risk of some random vector, then any almost surely larger eligible portfolio will cover the risk as well.

Set-valued conditional risk measures have been defined in \cite{FR12}.  Note that we give a stronger property for finiteness at zero than that given in \cite{FR12}. 
\begin{definition}
\label{defn_conditional}
A \textbf{\emph{conditional risk measure}} is a mapping $R_t: \LdpF{} \to \mathcal{P}(M_t;M_{t,+})$ which satisfies:
\begin{enumerate}
\item $\LdpK{p}{}{\R^d_+}$-monotonicity: if $Y - X \in \LdpK{p}{}{\R^d_+}$ then $R_t(Y) \supseteq R_t(X)$;
\item $M_t$-translativity: $R_t(X+m) = R_t(X)-m$ for any $X \in \LdpF{}$ and $m \in M_t$;
\item finiteness at zero: $R_t(0) \neq\emptyset$ is closed and $\Ft{t}$-decomposable, and $\P(\tilde R_t(0) = M) = 0$ where $\tilde R_t(0)$ is a $\Ft{t}$-measurable random set such that $R_t(0) = \LdpK{p}{t}{\tilde R_t(0)}$.
\end{enumerate}
A conditional risk measure is
\begin{itemize}
\item \textbf{\emph{normalized}} if $R_t(X) = R_t(X) + R_t(0)$ for
every $X \in \LdpF{}$;
\item\textbf{\emph{$\K$-compatible}} for some convex cone $\K \subseteq \LdpF{}$ if $R_t(X) = \bigcup_{K \in \K} R_t(X- K)$;
\item \textbf{\emph{local}} if for every $D \in \Ft{t}$ and every $X,Y \in \LdpF{}$, $R_t(1_D X + 1_{D^c} Y) = 1_D R_t(X) + 1_{D^c} R_t(Y)$;
\item\textbf{\emph{conditionally convex}} if for all $X,Y \in \LdpF{}$ and any $\lambda \in \LpK{\infty}{t}{[0,1]}$ it holds
$R_t(\lambda X + (1-\lambda)Y) \supseteq \lambda R_t(X) + (1-\lambda) R_t(Y)$;
\item\textbf{\emph{conditionally positive homogeneous}} if for all $X \in \LdpF{}$ and any $\lambda \in \LpK{\infty}{t}{\R_{++}}$ it holds
$R_t(\lambda X) = \lambda R_t(X)$;
\item\textbf{\emph{conditionally coherent}} if it is conditionally convex and conditionally positive homogeneous;
\item \textbf{\emph{closed}} if $\operatorname{graph}(R_t) = \lrcurly{(X,u) \in \LdpF{} \times M_t \; | \; u \in R_t(X)}$, i.e., the graph of the risk measure, is closed in the product topology;
\item\textbf{\emph{convex upper continuous (c.u.c.)}} if $R_t^-[D] := \lrcurly{X \in \LdpF{} \; | \; R_t(X) \cap D \neq \emptyset}$
is closed for any closed set $D \in \mathcal{G}(M_t;M_{t,-})$. 
\item\textbf{\emph{halfspace lower continuous (h.l.c.)}} if  $R_t^+[D] := \lrcurly{X \in \LdpF{} \; | \; R_t(X) \subseteq D}$
is closed for any closed halfspace $D \in \mathcal{G}(M_t;M_{t,+})$.
\end{itemize}
A \textbf{\emph{dynamic risk measure}} is a sequence $\seq{R}$ of conditional risk measures.  A dynamic
risk measure is said to have a certain property if $R_t$ has that property for all times $t$.
\end{definition}
The existence of the random set $\tilde R_t(0)$ introduced above is guaranteed by \cite[Theorem 2.1.6]{M05}.  This notation will be used for the remainder of this paper.
A static risk measure in the sense of~\cite{HHR10} is a conditional risk measure at time $0$.  Further, we wish to note that any conditionally convex risk measure is local.

In~\cite{FR12}, a primal representation for conditional risk measures is given via acceptance sets as in Definition~\ref{defn_acceptance}.
\begin{definition}
\label{defn_acceptance}
A set $A_t \subseteq \LdpF{}$ is a \textbf{\emph{conditional acceptance set}} at time $t$ if it satisfies $A_t + \LdpK{p}{}{\R^d_+} \subseteq A_t$,
$M_t \cap A_t \neq \emptyset$ is closed and $\Ft{t}$-decomposable, and there exists some $\Ft{t}$-measurable random set $\tilde R_t(0)$ such that $A_t \cap M_t = \LdpK{p}{t}{\tilde R_t(0)}$ and $\P(\tilde R_t(0) = M) = 0$.
\end{definition}

The acceptance set of a conditional risk measure $R_t$ is given by 
$A_t = \lrcurly{X \in \LdpF{} \; | \; 0 \in R_t(X)}$. 
For any conditional acceptance set $A_t$,  the function
$R_t(X) = \lrcurly{u \in M_t \; | \; X + u \in A_t}$
is a conditional risk measure. Further this relation is one-to-one, i.e.\ we can consider a $(R_t,A_t)$ pair or equivalently just one of the two.
Given a risk measure and acceptance set pair $(R_t,A_t)$ then the following properties hold.  Many of these results are presented in Proposition 2.11 in~\cite{FR12}; those that are new here are trivial by the primal representation of the risk measure and acceptance set.
\begin{itemize}
\item $R_t$ is $\K$-compatible if and only if $A_t + \K = A_t$;
\item $R_t$ is local if and only if $A_t$ is $\Ft{t}$-decomposable, i.e.\ $\sum_{n = 1}^N 1_{\Omega_t^n} X_n \in A_t$ for any finite partition $(\Omega_t^n)_{n = 1}^N \subseteq \Ft{t}$ of $\Omega$ and any family $(X_n)_{n = 1}^N \subseteq A_t$.
\item $R_t$ is conditionally convex if and only if $A_t$ is conditionally convex;
\item $R_t$ is conditionally positive homogeneous if and only if $A_t$ is a conditional cone;
\item $R_t$ has a closed graph if and only if $A_t$ is closed;
\item $R_t$ is convex upper continuous if and only if $A_t - D := \{X-d \; | \; X \in A_t, d \in D\}$ is closed for every $D \in \mathcal{G}(M_t;M_{t,-})$;
\item $R_t$ is halfspace lower continuous if and only if $A_t + \{u \in M_t \; | \; \E{\trans{w}u} > 0\}$ is open for every $w \in M_{t,+}^+ \backslash \prp{M_t}$.  If $R_t$ is local then the equivalence only requires $w \in \LdpK{q}{t}{M_+^+ \backslash \prp{M}}$. 
\end{itemize}

Before we continue, we wish to introduce and consider stepped risk measures.  That is, $R_{t,s}: M_s \to \mathcal{P}(M_t;M_{t,+})$ for $t \leq s$ as discussed in~\cite[Section~8.3]{FR12b}.  We denote and define the stepped acceptance set by $A_{t,s} := A_t \cap M_s$.  These stepped risk measures and acceptance sets are integral to a recursive relation for time consistency as discussed in \cite{FR12} and replicated in Theorem~\ref{thm_mptc_equiv} below.

\section{Scalarized risk measures}
\label{sec:scalar}

In this section, we consider the scalarizations of set-valued risk measures.  We recall some results on the primal and dual representation of multivariate scalar risk measures from~\cite{FR13-survey,FR15-supermtg}, and deduce some further results (Proposition~\ref{prop_finite} and~\ref{prop_attained} below). Note that in Appendix~\ref{sec:dual-aux} we provide also a new, auxiliary dual representation of multivariate scalar risk measures that splits the dual variables $(\Q,m_{\perp}) \in \W_t(w)$ into a stepped part from time $t$ to $s$ and a second set of dual variables that exists at time $s$.  This auxiliary dual representation is important for deducing the main result on time consistency in Section~\ref{sec:mptc}.

We now wish to introduce the multivariate scalar risk measures.  We will define these in the same manner as done in~\cite{FR13-survey,FR15-supermtg,FR18-1asset} via the scalarization of a set-valued risk measure.  
That is, we consider the primal representation -- given in \eqref{defn_scalar} below -- as the starting point.
\begin{definition}
\label{defn_scalar}
A \textbf{\emph{multivariate conditional scalarized risk measure}} is a mapping $\rho_t^{M,w}: \LdpF{} \to \barLzF{t}$ associated with a set-valued risk measure $R_t: \LdpF{} \to \mathcal{P}(M_t;M_{t,+})$ and normal direction $w \in \LdpK{q}{t}{M_+^+ \backslash \prp{M}}$ which satisfies:
\begin{align}
\rho_t^{M,w}(X) := \essinf\lrcurly{\trans{w}u \; | \; u \in R_t(X)} = \essinf\lrcurly{\trans{w}u \; | \; u \in M_t, \; X + u \in A_t}.
\end{align}
\end{definition}
Throughout much of this text we will omit the subspace parameter $M$ from the superscript of multivariate scalarized risk measures when this choice is clear from context.  Additionally, we will sometimes call the normal direction $w$ a pricing vector as, under markets with transaction costs, it must be a consistent price with respect to the market model defined by a solvency cone $\K$.

In the following proposition we show that the multivariate conditional scalar risk measures satisfy monotonicity and a translative property.  These properties are usually given as the definition of a risk measure in the literature (see e.g.\ \cite{FR13-survey,FR18-1asset}). However, here we consider the primal representation via set-valued risk measures as the starting point.  The choice to consider set-valued risk measures will become clearer with the consideration of time consistency properties in the next section.

\begin{proposition}[Proposition 2.30 of \cite{FR13-survey}]
\label{prop_multi-scalar_properties}
Let $\rho_t^w: \LdpF{} \to \barLzF{t}$ be a multivariate conditional scalar risk measure at time $t$ for pricing vector $w \in \LdpK{q}{t}{M_+^+ \backslash \prp{M}}$.  Then $\rho_t^w$ satisfies the following conditions.
\begin{enumerate}
\item If $X,Y \in \LdpF{}$ such that $Y - X \in \LdpK{p}{}{\R^d_+}$ then $\rho_t^w(Y) \leq \rho_t^w(X)$.
\item If $X \in \LdpF{}$ and $m \in M_t$ then $\rho_t^w(X + m) = \rho_t^w(X) - \trans{w}m$.
\end{enumerate}

Further, if we consider the family of such risk measures over all pricing vectors $w \in \LdpK{q}{t}{M_+^+ \backslash \prp{M}}$ then we
have the following finiteness properties.
\begin{enumerate}
\setcounter{enumi}{2}
\item $\rho_t^w(0) < +\infty$ for every $w \in \LdpK{q}{t}{M_+^+ \backslash \prp{M}}$.
\item $\rho_t^w(0) > -\infty$ for some $w \in \LdpK{q}{t}{M_+^+ \backslash \prp{M}}$.
\end{enumerate}
\end{proposition}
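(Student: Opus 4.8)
The plan is to read off all four assertions from the defining axioms of the underlying set-valued risk measure $R_t$, working directly with the primal representation $\rho_t^w(X) = \essinf\lrcurly{\trans{w}u \mid u \in M_t,\ X+u \in A_t} = \essinf\lrcurly{\trans{w}u \mid u \in R_t(X)}$. For the monotonicity assertion (1), if $Y - X \in \LdpK{p}{}{\R^d_+}$ then $\LdpK{p}{}{\R^d_+}$-monotonicity of $R_t$ gives $R_t(Y) \supseteq R_t(X)$; the essential infimum of $\trans{w}u$ for $Y$ is then taken over a larger set than for $X$, so it can only decrease, yielding $\rho_t^w(Y) \leq \rho_t^w(X)$ (note this direction needs no sign information on $w$). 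For translativity (2), $M_t$-translativity $R_t(X+m) = R_t(X) - m$ lets me substitute $u = v - m$ with $v \in R_t(X)$, so that $\trans{w}u = \trans{w}v - \trans{w}m$; since $w \in \LdpK{q}{t}{M_+^+\backslash\prp{M}}$ and $m \in M_t$ are both $\Ft{t}$-measurable, the term $\trans{w}m \in \LzF{t}$ factors out of the essinf, giving $\rho_t^w(X+m) = \rho_t^w(X) - \trans{w}m$. The only subtlety is that this $\Ft{t}$-measurable shift commutes with the essinf, which follows from locality of the conditional essinf over the $\Ft{t}$-measurable family $R_t(X) \subseteq M_t$.

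For finiteness from above (3), I would use that finiteness at zero guarantees $R_t(0) \neq \emptyset$. Choosing any $u_0 \in R_t(0) \subseteq M_t = \LdpK{p}{t}{M}$ and pairing it with $w \in \LdpK{q}{t}{M_+^+\backslash\prp{M}}$, Hölder's inequality gives $\trans{w}u_0 \in \LoF{}$, so in particular $\trans{w}u_0 < +\infty$ almost surely; hence $\rho_t^w(0) \leq \trans{w}u_0 < +\infty$ for \emph{every} admissible $w$.

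Assertion (4) is where the real work lies, and I expect the measurable version of a separation argument to be the main obstacle. Here I would exploit the remaining content of finiteness at zero, namely $\P(\tilde R_t(0) = M) = 0$, which says the closed upper set $R_t(0) = \LdpK{p}{t}{\tilde R_t(0)}$ is almost surely a proper subset of $M$. The goal is to exhibit a single pricing direction $w$ in which $R_t(0)$ is bounded below. Working $\omega$-wise, I would select a point $\bar u(\omega) \notin \tilde R_t(0)[\omega]$ and separate it from $\tilde R_t(0)[\omega]$ by a hyperplane with normal $w(\omega)$; since $\tilde R_t(0)$ is an upper set with respect to $M_+$, the normal is forced into the dual cone $M_+^+$ (otherwise $\trans{w}u$ could be driven to $-\infty$ along a recession direction), and strictness of the separation forces $w(\omega) \notin \prp{M}$. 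This gives $\trans{w}u \geq \trans{w}\bar u + \varepsilon$ on $R_t(0)$, whence $\rho_t^w(0) \geq \trans{w}\bar u + \varepsilon > -\infty$. The genuine difficulty is to carry out this separation $\Ft{t}$-measurably: one must choose $\bar u$ and $w$ as measurable selections so that $w \in \LdpK{q}{t}{M_+^+\backslash\prp{M}}$, which I would handle via a measurable-selection theorem for the random closed set $\tilde R_t(0)$ (in the spirit of \cite{M05}). Finally, the separation step implicitly uses convexity, i.e.\ that $\cl\co\tilde R_t(0) \neq M$; for the convex and coherent risk measures that are the focus here this holds automatically, and I would flag that this is the structural hypothesis the argument truly requires.
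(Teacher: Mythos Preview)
The paper does not supply its own proof of this proposition; it is quoted verbatim from \cite{FR13-survey}, so there is no in-paper argument to compare against. That said, your derivations of (1)--(3) are exactly the natural ones: monotonicity of $R_t$ enlarges the feasible set for the essential infimum, $M_t$-translativity shifts it by an $\Ft{t}$-measurable constant that pulls through $\essinf$, and nonemptiness of $R_t(0)$ together with H\"older gives an almost-sure finite upper bound $\trans{w}u_0$. These are the standard proofs and I see no gap.

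For (4) your separation-plus-measurable-selection outline is again the right strategy, and your caveat is not merely cautious but essential. Without the hypothesis that $\cl\co\,\tilde R_t(0)[\omega] \neq M$ almost surely, assertion (4) can genuinely fail: take $M = \R^2$ and the deterministic closed upper set $D = \{(x,y): x \geq 0\} \cup \{(x,y): y \geq 0\}$, which satisfies $D \neq M$ yet has $\inf_{u \in D} \trans{w}u = -\infty$ for \emph{every} $w \in \R^2_+ \setminus \{0\}$. So the proposition as literally stated (prior to Assumption~\ref{ass:convex}) requires at minimum that the closed convex hull of $\tilde R_t(0)$ be proper, and you have correctly isolated this as the structural input. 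Under conditional convexity of $R_t$ (which the paper imposes from Assumption~\ref{ass:convex} onward and which is likely a standing hypothesis in \cite{FR13-survey}) your argument goes through: the $\omega$-wise separating normal can be chosen measurably from the random closed set $(\tilde R_t(0))^+ \setminus \prp{M}$ via \cite{M05}, and after normalizing to unit length it lands in $\LdiK{t}{M_+^+ \setminus \prp{M}} \subseteq \LdpK{q}{t}{M_+^+ \setminus \prp{M}}$, with the lower bound $\trans{w}\bar u$ almost surely finite since $\bar u$ takes values in $\R^d$.
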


We refer to \cite{FR13-survey} for further properties relating the set-valued risk measures with the family of multivariate scalarized risk measures.  Below we copy one such result demonstrating that convexity and coherence of the set-valued risk measures imply the same properties for the associated scalarized risk measures.
\begin{corollary}[Corollary~3.17 of \cite{FR13-survey}]
\label{cor_vector_scalar}
Let $R_t: \LdpF{} \to \mathcal{P}(M_t;M_{t,+})$ be a conditionally convex (conditionally positive
homogeneous) conditional risk measure at time $t$ with closed and $\Ft{t}$-decomposable images, then the associated 
multivariate scalarized risk measure $\rho_t^w: \LdpF{} \to \barLzF{t}$ is conditionally convex (resp.\ conditionally positive homogeneous).
\end{corollary}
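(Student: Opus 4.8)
The plan is to work directly from the scalarization formula $\rho_t^w(X) = \essinf\{\trans{w}u \; | \; u \in R_t(X)\}$ and to transport the set-valued inclusion (resp.\ equality) through the essinf. The scalar analogues to establish are conditional convexity, $\rho_t^w(\lambda X + (1-\lambda)Y) \leq \lambda \rho_t^w(X) + (1-\lambda)\rho_t^w(Y)$ for every $\lambda \in \LpK{\infty}{t}{[0,1]}$, and conditional positive homogeneity, $\rho_t^w(\lambda X) = \lambda \rho_t^w(X)$ for every $\lambda \in \LpK{\infty}{t}{\R_{++}}$. The guiding observation is that the set inclusion $\supseteq$ for $R_t$ becomes the inequality $\leq$ for $\rho_t^w$, because an essinf taken over a larger collection of random variables can only be smaller.

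For convexity I would first invoke conditional convexity of $R_t$, namely $R_t(\lambda X + (1-\lambda)Y) \supseteq \lambda R_t(X) + (1-\lambda)R_t(Y)$. Since the essinf over a superset is dominated by the essinf over a subset, this yields $\rho_t^w(\lambda X + (1-\lambda)Y) \leq \essinf\{\trans{w}u \; | \; u \in \lambda R_t(X) + (1-\lambda)R_t(Y)\}$. Writing a generic element of the Minkowski sum as $u = \lambda u_X + (1-\lambda)u_Y$ with $u_X \in R_t(X)$ and $u_Y \in R_t(Y)$, the linearity of $u \mapsto \trans{w}u$ and the $\Ft{t}$-measurability of $\lambda$ give $\trans{w}u = \lambda \trans{w}u_X + (1-\lambda)\trans{w}u_Y$. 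It then remains to separate the essinf of this sum into $\lambda\,\essinf\{\trans{w}u_X \; | \; u_X \in R_t(X)\} + (1-\lambda)\,\essinf\{\trans{w}u_Y \; | \; u_Y \in R_t(Y)\}$, which is precisely $\lambda \rho_t^w(X) + (1-\lambda)\rho_t^w(Y)$.

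For positive homogeneity I would use the exact equality $R_t(\lambda X) = \lambda R_t(X)$. The substitution $u = \lambda u'$ with $u' \in R_t(X)$, together with $\trans{w}(\lambda u') = \lambda \trans{w}u'$, reduces the claim to the identity $\essinf\{\lambda \trans{w}u' \; | \; u' \in R_t(X)\} = \lambda\,\essinf\{\trans{w}u' \; | \; u' \in R_t(X)\}$. Since $\lambda > 0$ is $\Ft{t}$-measurable, this follows by localizing on $\Ft{t}$ and dividing by $\lambda$.

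The main obstacle in both parts is the interchange of the essinf with the conditional linear combination: separating the essinf of a sum into the sum of essinfs, and pulling the $\Ft{t}$-measurable scalar $\lambda$ outside the essinf. The $\geq$ direction of the separation is immediate, since for any fixed $u_X, u_Y$ one has $\lambda \trans{w}u_X + (1-\lambda)\trans{w}u_Y \geq \lambda \rho_t^w(X) + (1-\lambda)\rho_t^w(Y)$ using $\lambda, 1-\lambda \geq 0$. The $\leq$ direction requires that each family $\{\trans{w}u \; | \; u \in R_t(X)\}$ be downward directed, so that its essinf is attained along a decreasing sequence; this is exactly where the $\Ft{t}$-decomposability of the images of $R_t$ enters, as given $u_1, u_2 \in R_t(X)$ one may paste $u = 1_D u_1 + 1_{D^c} u_2$ with $D = \{\trans{w}u_1 \leq \trans{w}u_2\} \in \Ft{t}$ to obtain an element of $R_t(X)$ realizing the pointwise minimum. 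Combining the resulting approximating sequences for $X$ and $Y$ and passing to the limit then delivers the separation and, with it, both claims.
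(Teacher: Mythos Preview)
Your argument is correct and is the standard route to this result. Note that the present paper does not supply its own proof of this corollary; it is imported verbatim from \cite{FR13-survey}, so there is nothing to compare against here beyond confirming that your reasoning is sound.

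Two minor remarks. First, for conditional convexity you only need the $\leq$ direction of the separation $\essinf\{\lambda\,\trans{w}u_X + (1-\lambda)\,\trans{w}u_Y\} \leq \lambda\rho_t^w(X) + (1-\lambda)\rho_t^w(Y)$; the $\geq$ direction, while true, is not used. Your downward-directedness argument via $\Ft{t}$-decomposability (pasting on $D = \{\trans{w}u_1 \leq \trans{w}u_2\}$) is exactly what is required, and it yields decreasing sequences $\trans{w}u_X^n \downarrow \rho_t^w(X)$, $\trans{w}u_Y^n \downarrow \rho_t^w(Y)$, from which the desired inequality follows in the limit. Second, the edge cases where one of the images is empty or the essinf is $-\infty$ deserve a sentence: if $R_t(X) = \emptyset$ the Minkowski sum is empty and the bound $\leq +\infty$ is vacuous; if $\rho_t^w(X) = -\infty$ on some $\Ft{t}$-set with $\lambda > 0$ there, your approximating sequence drives the combined essinf to $-\infty$ on that set as well. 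With these points noted, the proof is complete.
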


\begin{assumption}\label{ass:convex}
For the remainder of this work we will assume that the underlying set-valued risk measure is normalized, c.u.c., and conditionally convex. 
Though we assume c.u.c., this can be weakened as we will see in some of the examples.
\end{assumption}

In prior works, see \cite{FR13-survey}, the normal direction parameter $w$ has been defined as being in the larger space $w \in M_{t,+}^+ \backslash \prp{M_t}$.  However, due to the construction of the multivariate scalarized risk measures we have that $\rho_t^w(X) = 0$ on $\{\omega \in \Omega \; | \; w(\omega) \in \prp{M}\}$.  Therefore, without loss of generality we can consider those normal directions that almost surely do not take value in $\prp{M}$, i.e.\ $w \in \LdpK{q}{t}{M_+^+ \backslash \prp{M}}$.

Note that when $R_t$ is $\K_t$-compatible for some $\Ft{t}$-decomposable closed cone with $M_{t,+} \subseteq \K_t \subsetneq M_t$, then $\rho_t^w(X) = -\infty$ on $\{\omega \in \Omega \; | \; w(\omega) \not\in \K_t[\omega]^+\}$ for every $X \in \LdpF{}$.  As such, we would be able to restrict the set of parameters to $w \in \LdpK{q}{t}{K_t^+ \backslash \prp{M}}$ where $\K_t = \LdpK{p}{t}{K_t}$; such a representation exists by \cite[Theorem 2.1.6]{M05}.  In particular, by $R_t$ normalized, we need only consider $w \in \LdpK{q}{t}{\tilde R_t(0)^+ \backslash \prp{M}}$ where $R_t(0) = \LdpK{p}{t}{\tilde R_t(0)}$ is a closed convex cone.  In fact, the properties of $R_t$ assumed in Assumption~\ref{ass:convex} imply normalization of all scalarized risk measures, i.e.\ $\rho_t^w(0) = 0$ for any $w \in \LdpK{q}{t}{\tilde R_t(0)^+ \backslash \prp{M}}$.

Before continuing, we wish to recall equivalence between the closed and conditionally convex set-valued risk measures and the family of scalarized risk measures from \cite[Lemma~3.18]{FR13-survey} (and utilizing results from \cite[Theorem 3.3]{FR13-survey} on the measurable selector approach to set-valued risk measures)
\begin{equation}
\label{eq_set-scalar}
R_t(X) = \bigcap_{w \in \LdpK{q}{t}{\tilde R_t(0)^+ \backslash \prp{M}}} \lrcurly{u \in M_t \; | \; \rho_t^w(X) \leq \trans{w}u \; \Pas} \quad \forall X \in \LdpF{}.
\end{equation}
By this representation, we can similarly define the acceptance set for the primal representation as
\[A_t = \bigcap_{w \in \LdpK{q}{t}{\tilde R_t(0)^+ \backslash \prp{M}}} \lrcurly{X \in \LdpF{} \; | \; \rho_t^w(X) \leq 0 \; \Pas}.\]

We complete our discussion of the primal representation for the multivariate scalar risk measures by considering the integrability of the images $\rho_t^w(X)$. 
\begin{proposition}
\label{prop_finite}
Let $p = +\infty$.  If $\LdiF{} = A_t + M_t$ then the multivariate scalarized risk measure is integrable, i.e.\ $\rho_t^w(X) \in \LoF{t}$ for any $w \in \LdoK{t}{\tilde R_t(0)^+ \backslash \prp{M}}$ and $X \in \LdiF{}$.  Moreover, the multivariate scalarized risk measure is integrable if the full eligible space $M = \R^d$ is taken.
\end{proposition}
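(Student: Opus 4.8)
The plan is to prove integrability by sandwiching $\rho_t^w(X)$ between two functions in $\LoF{t}$; since $\rho_t^w(X)$ is $\Ft{t}$-measurable by construction, this immediately gives the claim. Throughout I would exploit that with $p = +\infty$ we have $q = 1$, so $w \in \LdoK{t}{\tilde R_t(0)^+ \backslash \prp{M}} \subseteq \Lnp{t}{1}{\R^d}$ while every $m \in M_t \subseteq \LdiF{t}$ is norm bounded; hence each product $\trans{w}m = \sum_{i=1}^d w_i m_i$ lies in $\LoF{t}$ by H\"older's inequality and is $\Ft{t}$-measurable. I will also use that $\rho_t^w$ is normalized with $\rho_t^w(0) = 0$ (as recorded after Assumption~\ref{ass:convex}), is monotone and $M_t$-translative (Proposition~\ref{prop_multi-scalar_properties}), and is conditionally convex (Corollary~\ref{cor_vector_scalar}).

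For the \emph{upper} bound I would invoke the hypothesis $\LdiF{} = A_t + M_t$ directly. Given $X \in \LdiF{}$, write $X = a + m$ with $a \in A_t$ and $m \in M_t$. Then $X + (-m) = a \in A_t$, so $-m \in R_t(X)$ and therefore
\[
\rho_t^w(X) = \essinf\lrcurly{\trans{w}u \; | \; u \in R_t(X)} \le \trans{w}(-m) = -\trans{w}m \in \LoF{t}.
\]
In particular $\rho_t^w(X) < +\infty$ almost surely. Applying the same argument to $-X \in \LdiF{}$ (writing $-X = a' + m'$) gives $\rho_t^w(-X) \le -\trans{w}m' \in \LoF{t}$, so both $\rho_t^w(X)$ and $\rho_t^w(-X)$ are finite above.

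The main obstacle is the \emph{lower} bound, i.e.\ ruling out $\rho_t^w(X) = -\infty$ and producing an integrable minorant; here the key idea is to combine convexity with the upper bound just obtained for $-X$. Since $0 = \tfrac12 X + \tfrac12(-X)$, conditional convexity and normalization yield
\[
0 = \rho_t^w(0) \le \tfrac12 \rho_t^w(X) + \tfrac12 \rho_t^w(-X).
\]
Because $\rho_t^w(-X) < +\infty$, this rearranges (almost surely) to $\rho_t^w(X) \ge -\rho_t^w(-X) \ge \trans{w}m'$. Thus $\trans{w}m' \le \rho_t^w(X) \le -\trans{w}m$ with both bounds in $\LoF{t}$, whence $\rho_t^w(X) \in \LoF{t}$, proving the first assertion.

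For the final claim I would argue directly when $M = \R^d$, \emph{without} assuming $\LdiF{} = A_t + M_t$, since now $M_t = \LdiF{t}$ contains all deterministic constant vectors and $w$ takes values in $M_+^+ = \R^d_+$. Choosing $c \ge 0$ with $-c\1 \le X \le c\1$ componentwise almost surely (possible as $X \in \LdiF{}$), monotonicity gives $\rho_t^w(c\1) \le \rho_t^w(X) \le \rho_t^w(-c\1)$, while translativity and normalization evaluate the endpoints as $\rho_t^w(\pm c\1) = \rho_t^w(0) \mp c\,\trans{w}\1 = \mp c\,\trans{w}\1$. Since $\trans{w}\1 \ge 0$, this gives $\abs{\rho_t^w(X)} \le c\,\trans{w}\1 \in \LoF{t}$, establishing integrability. (One may note that finiteness above here also recovers $\LdiF{} = A_t + M_t$ as a by-product, so the general argument applies as well.)
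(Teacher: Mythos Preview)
Your proof is correct. The upper bound and the $M=\R^d$ case coincide with the paper's argument essentially verbatim. The genuine difference is in the lower bound for general $M$: the paper constructs an auxiliary risk measure $\bar R_t$ on the full eligible space $\R^d$ by padding out the last $d-m$ coordinates with $\R^{d-m}_+$, checks that $\bar R_t$ inherits normalization, closedness and conditional convexity, and then uses the already proven $M=\R^d$ case to obtain $\rho_t^{M,w}(X) = \rho_t^{M,v}(X) \ge \bar\rho_t^{\R^d,v}(X) \in \LoF{t}$. You instead exploit conditional convexity of $\rho_t^w$ directly via $0 = \rho_t^w(0) \le \tfrac12\rho_t^w(X) + \tfrac12\rho_t^w(-X)$, which together with the upper bound applied to $-X$ yields $\rho_t^w(X) \ge -\rho_t^w(-X) \ge \trans{w}m'$. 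Your route is shorter and avoids the auxiliary construction entirely; the paper's route has the minor advantage that the $M=\R^d$ step it relies on uses only monotonicity, translativity and normalization, so the reduction makes explicit that convexity enters only through verifying that $\bar R_t$ is well behaved, whereas your argument invokes convexity of the scalarization at the key inequality. Under Assumption~\ref{ass:convex} both are available, so this is a matter of taste.
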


\begin{proof}
The proof is given in Appendix~\ref{sec:App4.1}.
\end{proof}

\begin{assumption}\label{ass:finite}
For the remainder of this work we will assume that all scalarizations are integrable, i.e.\ $\rho_t^w(X) \in \LoF{t}$ for all times $t$, all normal directions $w \in \LdpK{q}{t}{\tilde R_t(0)^+ \backslash \prp{M}}$, and all portfolios $X \in \LdpF{}$.  In particular, by Proposition~\ref{prop_finite}, this is true if $p = +\infty$ and either $\LdiF{} = A_t + M_t$ or $M = \R^d$.
\end{assumption}


With these results we want to revisit the dual representation from \cite{FR15-supermtg} which is further studied in the special case with a single eligible asset in \cite{FR18-1asset}.  
\begin{proposition}[Proposition~A.1.7 of \cite{FR15-supermtg}]
\label{prop_scalar-cc}
Let $w \in \LdpK{q}{t}{\tilde R_t(0)^+ \backslash \prp{M}}$, then for every $X \in \LdpF{}$
\begin{equation*}
\rho_t^w(X) 
= \esssup_{(\Q,m_{\perp}) \in \W_t(w)} \lrparen{-\alpha_t(\Q,w + m_{\perp}) + \transp{w + m_{\perp}}\EQt{-X}{t}}
\end{equation*}
where 
\begin{align*}
\alpha_t(\Q,w) &:= \esssup_{Z \in A_t} \trans{w}\EQt{-Z}{t} \quad \text{and}\\
\W_t(w) &:= \lrcurly{(\Q,m_{\perp}) \in \mathcal{M}^d \times \prp{M_t} \; | \; (\Q,w + m_{\perp}) \in \W_t}.
\end{align*}
\end{proposition}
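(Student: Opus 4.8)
The plan is to prove the representation by the two opposing inequalities: an elementary weak-duality bound $\geq$ read off from the primal definition, and a strong-duality bound $\leq$ obtained from conditional convex duality. It is convenient to write, for $(\Q,m_{\perp}) \in \W_t(w)$, the dual density $Y^* := w_t^T(\Q,w+m_{\perp}) = \diag{w+m_{\perp}}\xi_{t,T}(\Q)$, which lies in $\LdpK{q}{}{\R^d_+}$ and satisfies $\Et{Y^*}{t} = w+m_{\perp}$ because $\Et{\bar{\xi}_{t,T}(\Q_i)}{t} = 1$ for each $i$. With this notation $\transp{w+m_{\perp}}\EQt{-X}{t} = \Et{\transp{Y^*}(-X)}{t}$ and $\alpha_t(\Q,w+m_{\perp}) = \esssup_{Z \in A_t}\Et{\transp{Y^*}(-Z)}{t}$, so the asserted right-hand side is exactly a conditional Fenchel conjugate expression restricted to admissible densities $Y^*$.

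For $\geq$, fix $(\Q,m_{\perp}) \in \W_t(w)$ and any $u \in R_t(X)$, so that $Z := X+u \in A_t$ by the primal representation of $R_t$. The definition of $\alpha_t$ gives $\alpha_t(\Q,w+m_{\perp}) \geq \transp{w+m_{\perp}}\EQt{-(X+u)}{t} = \transp{w+m_{\perp}}\EQt{-X}{t} - \transp{w+m_{\perp}}u$; since $u \in M_t$ is $\Ft{t}$-measurable we have $\EQt{u}{t} = u$ (again using $\Et{\bar{\xi}_{t,T}(\Q_i)}{t} = 1$), and since $m_{\perp} \in \prp{M_t}$ with $u \in M_t$ we have $\transp{w+m_{\perp}}u = \transp{w}u$. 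Hence $\transp{w}u \geq \transp{w+m_{\perp}}\EQt{-X}{t} - \alpha_t(\Q,w+m_{\perp})$; taking the essential infimum over $u \in R_t(X)$ and then the essential supremum over $(\Q,m_{\perp}) \in \W_t(w)$ yields $\geq$. This step uses only the primal definition, the orthogonality $\transp{m_{\perp}}u = 0$, and the martingale normalization of the conditional densities.

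For the reverse inequality $\leq$, I would invoke scalar conditional convex duality. By Corollary~\ref{cor_vector_scalar}, $\rho_t^w$ is conditionally convex; by Assumption~\ref{ass:convex} the acceptance set $A_t$ is closed, which makes $\rho_t^w$ lower semicontinuous for the relevant $\LdpF{}$/$\LdqF{}$ pairing; and by Assumption~\ref{ass:finite} it is proper and $\LoF{t}$-valued. Thus $\rho_t^w$ equals its conditional biconjugate. Computing $\beta_t(Y^*) := \esssup_{X \in \LdpF{}}\lrparen{\Et{\transp{Y^*}(-X)}{t} - \rho_t^w(X)}$, I substitute $\rho_t^w(X) = \essinf\lrcurly{\transp{w}u \; | \; u \in M_t,\ X+u \in A_t}$ and change variables $Z = X+u \in A_t$. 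This splits $\beta_t(Y^*)$ into $\esssup_{Z \in A_t}\Et{\transp{Y^*}(-Z)}{t}$ plus the purely linear term $\esssup_{u \in M_t}\lrparen{\Et{\transp{Y^*}u}{t} - \transp{w}u}$. Because $M_t$ is a linear subspace, the linear term equals $0$ when $\Et{Y^*}{t} - w \in \prp{M_t}$ and $+\infty$ otherwise, forcing $\Et{Y^*}{t} = w+m_{\perp}$ for some $m_{\perp} \in \prp{M_t}$; monotonicity (Proposition~\ref{prop_multi-scalar_properties}) forces $Y^* \in \LdpK{q}{}{\R^d_+}$. Together these constraints identify the effective domain of $\beta_t$ with $\lrcurly{w_t^T(\Q,w+m_{\perp}) \; | \; (\Q,m_{\perp}) \in \W_t(w)}$ and give $\beta_t = \alpha_t$ there, so the biconjugate identity is precisely the claimed formula.

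The main obstacle is this strong-duality step, namely justifying the conditional Fenchel--Moreau biconjugation in the $\LdpF{}$/$\LdqF{}$ duality (and, for $p = +\infty$, under the weak* topology), which rests on the essential-supremum/module machinery and measurable-selection arguments, together with the careful deduction of lower semicontinuity of $\rho_t^w$ from convex upper continuity of $R_t$. An alternative that avoids a standalone scalar biconjugation theorem is to begin from the set-valued dual representation of $R_t$ (Theorem~4.7 of~\cite{FR12}) and scalarize it through $\rho_t^w(X) = \essinf\lrcurly{\transp{w}u \; | \; u \in R_t(X)}$ together with relation~\eqref{eq_set-scalar}, reading off $\alpha_t$ and the admissible set $\W_t(w)$ directly. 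In either route, Assumption~\ref{ass:finite} guarantees that all essential suprema and infima are finite and $\LoF{t}$-valued, so no $\pm\infty$ issues arise.
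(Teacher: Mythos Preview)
The paper does not supply its own proof of this proposition; it is quoted verbatim from \cite[Proposition~A.1.7]{FR15-supermtg}. From the way the surrounding text uses that reference (in particular the proof of Proposition~\ref{prop_attained}, which invokes \cite[Propositions~A.1.1 and~A.1.3]{FR15-supermtg}), the argument in \cite{FR15-supermtg} evidently proceeds by first establishing a \emph{real-valued} dual representation for the averaged functional $\bar\rho_t^w(X) := \E{\rho_t^w(X)} = \inf_{u \in R_t(X)} \E{\trans{w}u}$ via ordinary Fenchel--Moreau duality (this is where c.u.c.\ of $R_t$ enters, yielding lower semicontinuity of $\bar\rho_t^w$), and then lifting the result to the conditional level through $\Ft{t}$-decomposability of $A_t$, exactly as is done repeatedly elsewhere in the present paper (e.g.\ the end of the proof of Proposition~\ref{prop_attained} and the interchange arguments in Proposition~\ref{prop_extended_dual}).

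Your weak-duality step ($\geq$) is correct and is the same argument one would find anywhere. For the hard direction you propose two routes. The first, a direct conditional Fenchel--Moreau theorem in the $L^0$-module sense, is valid in principle but is heavier machinery than the source actually uses, and your justification of lower semicontinuity (``closedness of $A_t$ makes $\rho_t^w$ lsc'') is too quick: it is c.u.c.\ of $R_t$, not mere closedness of $A_t$, that delivers lower semicontinuity, and in \cite{FR15-supermtg} this is stated for $\bar\rho_t^w$ rather than for $\rho_t^w$ itself. Your second route, scalarizing the set-valued dual of \cite[Theorem~4.7]{FR12} through~\eqref{eq_set-scalar}, is closer in spirit to what is done. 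In either case the cleanest path---and the one the cited reference takes---is to drop to expectations, apply standard convex duality there, identify the effective dual domain exactly as you do (your computation of $\beta_t$ and the constraint $\Et{Y^*}{t} - w \in \prp{M_t}$ is correct), and then use decomposability to pass the $\sup$ back inside the expectation. So your proposal is essentially right, with the caveat that the strong-duality step is more efficiently executed at the level of $\bar\rho_t^w$ than at the conditional level.
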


\begin{remark}
\label{rem_fullM}
If $M = \R^d$ then $\prp{M} = \{0\}$.  The dual representation thus can be provided by a single dual variable $\Q$, i.e.\ for any $X \in \LdpF{}$ and $w \in \LdpK{q}{t}{\tilde R_t(0)^+ \backslash \{0\}}$
\[\rho_t^w(X) = \esssup_{\Q \in \W_t^d(w)} \lrparen{-\alpha_t(\Q,w) + \trans{w}\EQt{-X}{t}}\]
with $\W_t^d(w) = \lrcurly{\Q \in \mathcal{M}^d \; | \; (\Q,w) \in \W_t}$.
\end{remark}

\begin{corollary}
\label{cor_scalar-cc}
Consider the setting of Proposition~\ref{prop_scalar-cc} such that $R_t$ is additionally conditionally coherent.  Then for every $X \in \LdpF{}$
\[\rho_t^w(X) = \esssup_{(\Q,m_{\perp}) \in \W_t^R(w)} \transp{w + m_{\perp}}\EQt{-X}{t}\]
where $\W_t^R(w) := \lrcurly{(\Q,m_{\perp}) \in \W_t(w) \; | \; w_t^T(\Q,w+m_{\perp}) \in A_t^+}$.
\end{corollary}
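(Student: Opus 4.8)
The plan is to start from the convex dual representation in Proposition~\ref{prop_scalar-cc} and to show that, under conditional coherence, the penalty term $\alpha_t$ collapses to a $\{0,+\infty\}$-valued quantity whose zero set is exactly encoded by the constraint $w_t^T(\Q,w+m_{\perp}) \in A_t^+$ defining $\W_t^R(w)$. Since $R_t$ is now conditionally positive homogeneous, the equivalence listed after Definition~\ref{defn_acceptance} (conditional positive homogeneity of $R_t$ iff $A_t$ is a conditional cone) applies, so $A_t$ is a conditional cone, and by normalization $0 \in A_t$. This cone structure, together with the $\Ft{t}$-decomposability (locality) of $A_t$, is what drives the whole argument.

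First I would analyze $\alpha_t(\Q,w+m_{\perp}) = \esssup_{Z \in A_t}\transp{w+m_{\perp}}\EQt{-Z}{t}$ for fixed $(\Q,m_{\perp}) \in \W_t(w)$. Taking $Z=0$ gives $\alpha_t \geq 0$. Because $A_t$ is a conditional cone, for any $\Ft{t}$-measurable $\lambda \geq 0$ and $Z \in A_t$ we have $\lambda Z \in A_t$, and pulling the $\Ft{t}$-measurable factor out of the conditional expectation yields $\transp{w+m_{\perp}}\EQt{-\lambda Z}{t} = \lambda\,\transp{w+m_{\perp}}\EQt{-Z}{t}$. Hence on any $D \in \Ft{t}$ where some admissible $Z$ makes the integrand strictly positive, rescaling by $\lambda = n\1_D$ and sending $n \to \infty$ forces $\alpha_t = +\infty$ there, whereas $\alpha_t = 0$ wherever $\transp{w+m_{\perp}}\EQt{-Z}{t} \leq 0$ for every $Z \in A_t$. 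Using decomposability to make the family $\{\transp{w+m_{\perp}}\EQt{-Z}{t} : Z \in A_t\}$ upward directed, so that its esssup is an increasing a.s.\ limit, I would conclude $\alpha_t(\Q,w+m_{\perp}) \in \{0,+\infty\}$ a.s.

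Next I would identify the event $\{\alpha_t = 0\}$ with membership in $A_t^+$. The pairing identity $\langle w_t^T(\Q,w+m_{\perp}),Z\rangle = \E{\transp{w+m_{\perp}}\EQt{Z}{t}}$ follows from the definitions of $w_t^s$ and of the $\P$-a.s.\ conditional expectation together with the tower property, using that $w+m_{\perp}$ is $\Ft{t}$-measurable. Thus $w_t^T(\Q,w+m_{\perp}) \in A_t^+$ reads $\E{\transp{w+m_{\perp}}\EQt{Z}{t}} \geq 0$ for all $Z \in A_t$, and testing against $\1_D Z \in A_t$ (valid for $D \in \Ft{t}$ by decomposability and $0 \in A_t$) upgrades this to the a.s.\ statement $\transp{w+m_{\perp}}\EQt{Z}{t} \geq 0$ for all $Z$, which by the previous step is precisely $\alpha_t(\Q,w+m_{\perp}) = 0$ a.s. Substituting into Proposition~\ref{prop_scalar-cc}, every $(\Q,m_{\perp}) \in \W_t^R(w)$ has vanishing penalty, so since $\W_t^R(w) \subseteq \W_t(w)$ one immediately gets $\rho_t^w(X) \geq \esssup_{(\Q,m_{\perp})\in\W_t^R(w)} \transp{w+m_{\perp}}\EQt{-X}{t}$.

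The main obstacle is the reverse inequality, because a generic $(\Q,m_{\perp}) \in \W_t(w)$ may have $\alpha_t = 0$ only on a proper subset $B=\{\alpha_t=0\}\in\Ft{t}$ of $\Omega$, so it contributes to the penalized esssup on $B$ without lying in $\W_t^R(w)$. I would resolve this by a pasting argument exploiting that both $\alpha_t$ and the admissible dual set are local: the family of measures agreeing with $\P$ on $\Ft{t}$ is stable under $\Ft{t}$-concatenation and $m_{\perp}$ may be chosen $\Ft{t}$-measurably, so I can concatenate the given $(\Q,m_{\perp})$ on $B$ with a fixed reference element of $\W_t^R(w)$ on $B^c$. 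Locality of $\alpha_t$ makes the pasted variable have penalty $\1_B\cdot 0 + \1_{B^c}\cdot 0 = 0$, hence it lies in $\W_t^R(w)$ and matches $\transp{w+m_{\perp}}\EQt{-X}{t}$ on $B$; a countable exhaustion then gives $\esssup_{\W_t^R(w)} \transp{w+m_{\perp}}\EQt{-X}{t} \geq \esssup_{\W_t(w)}(-\alpha_t + \transp{w+m_{\perp}}\EQt{-X}{t}) = \rho_t^w(X)$. Existence of a reference element of $\W_t^R(w)$ follows from properness of the representation (consistency with $\rho_t^w(0)=0$). Combining the two inequalities proves the claim; this coherence-to-penalty-free reduction is the conditional, multivariate analogue of the classical collapse of the convex penalty to the indicator of the maximal dual set.
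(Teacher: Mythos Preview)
Your proposal is correct and follows essentially the same approach as the paper, which simply cites Corollary~2.5 of \cite{FP06} together with the observation that $\alpha_t(\Q,w+m_{\perp}) = 0$ if and only if $w_t^T(\Q,w+m_{\perp}) \in A_t^+$. You have spelled out in detail the $\{0,+\infty\}$-valued penalty argument and the $\Ft{t}$-pasting step that the paper leaves implicit in that citation.
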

\begin{proof}
This follows identically to Corollary~2.5 of \cite{FP06} noting that $\alpha_t(\Q,w+m_{\perp}) = 0$ if and only if $w_t^T(\Q,w+m_{\perp}) \in A_t^+$.
\end{proof}

We now provide a dual representation for the stepped scalarized risk measures. 

\begin{corollary}
\label{cor_scalar-stepped}
Let $w \in \LdpK{q}{t}{\tilde R_t(0)^+ \backslash \prp{M}}$, then for every $X \in M_s$
\begin{align*}
\rho_{t,s}^w(X) &:= \essinf_{u \in R_{t,s}(X)} \trans{w}u \\
&= \esssup_{(\Q,m_{\perp}) \in \W_{t,s}(w)} \lrparen{-\alpha_{t,s}(\Q,w + m_{\perp}) + \transp{w + m_{\perp}}\EQt{-X}{t}}
\end{align*}
where 
\begin{align*}
\alpha_{t,s}(\Q,w) &:= \esssup_{Z \in A_{t,s}} \trans{w}\EQt{-Z}{t} \quad \text{and}\\
\W_{t,s}(w) &:= \lrcurly{(\Q,m_{\perp}) \in \mathcal{M}^d \times \prp{M_t} \; | \; (\Q,w + m_{\perp}) \in \W_{t,s}}.
\end{align*}

If $R_t$ is additionally conditionally coherent, then for every $X \in M_s$
\begin{equation*}
\rho_{t,s}^w(X) = \esssup_{(\Q,m_{\perp}) \in \W_{t,s}^R(w)} \transp{w + m_{\perp}}\EQt{-X}{t}
\end{equation*}
where $\W_{t,s}^R(w) := \lrcurly{(\Q,m_{\perp}) \in \W_{t,s}(w) \; | \; w_t^s(\Q,w+m_{\perp}) \in A_{t,s}^+}$.
\end{corollary}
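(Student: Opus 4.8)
The plan is to observe that the stepped risk measure $R_{t,s}: M_s \to \mathcal{P}(M_t;M_{t,+})$ is nothing but a conditional risk measure whose domain is the closed linear subspace $M_s \subseteq \LdpF{}$ and whose acceptance set is the stepped set $A_{t,s} = A_t \cap M_s$; consequently its scalarization $\rho_{t,s}^w$ admits the same dual representation as in Proposition~\ref{prop_scalar-cc}, once the domain $\LdpF{}$ is replaced by $M_s$ throughout. First I would verify that $R_{t,s}$ inherits normalization, convex upper continuity, and conditional convexity from $R_t$ by restriction to $M_s$, so that the hypotheses needed for the dual representation remain in force. With these properties in hand, the derivation proceeds exactly as in the proof of Proposition~\ref{prop_scalar-cc}, producing the penalty function $\alpha_{t,s}(\Q,w) = \esssup_{Z \in A_{t,s}} \trans{w}\EQt{-Z}{t}$, which differs from $\alpha_t$ only in that the essential supremum now ranges over $A_{t,s}$ rather than $A_t$.

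The one genuinely new point, and the step I expect to require the most care, is identifying the correct set of dual variables. In Proposition~\ref{prop_scalar-cc} the constraint $w_t^T(\Q,w) \in \LdpK{q}{}{\R^d_+}$ defining $\W_t$ arises from the $\LdpK{p}{}{\R^d_+}$-monotonicity of $R_t$ on the full domain $\LdpF{} = \LdpK{p}{T}{\R^d}$, and the dual variable must be stepped all the way to the terminal time $T$ because that is where the domain lives. For the stepped risk measure the domain is instead $M_s = \LdpK{p}{s}{M}$, so the relevant monotonicity is only with respect to the smaller cone $M_{s,+}$, and the dual variable need only be stepped to time $s$. Tracking these two changes simultaneously replaces the terminal positivity condition $w_t^T(\Q,w) \in \LdpK{q}{}{\R^d_+}$ with $w_t^s(\Q,w) \in M_{s,+}^+$, which is precisely the condition defining $\W_{t,s}$. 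The inclusion $\W_{t,s} \supseteq \W_t$ recorded after the definition reflects that $M_{s,+}^+$ is a larger cone than $\R^d_+$, consistent with the weaker monotonicity requirement on the restricted domain.

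Finally, for the conditionally coherent case I would mirror Corollary~\ref{cor_scalar-cc} exactly: conditional positive homogeneity makes $A_{t,s}$ a conditional cone, so the penalty $\alpha_{t,s}(\Q,w+m_{\perp})$ takes only the values $0$ and $+\infty$, and it vanishes precisely when $w_t^s(\Q,w+m_{\perp}) \in A_{t,s}^+$. Intersecting $\W_{t,s}(w)$ with this condition yields $\W_{t,s}^R(w)$ and collapses the general representation to the stated coherent form $\rho_{t,s}^w(X) = \esssup_{(\Q,m_{\perp}) \in \W_{t,s}^R(w)} \transp{w+m_{\perp}}\EQt{-X}{t}$, completing the argument.
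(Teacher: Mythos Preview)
Your proposal is correct and aligns with the paper's approach: the paper simply cites \cite[Corollary~A.1.8]{FR15-supermtg} for the convex case and invokes the logic of Corollary~\ref{cor_scalar-cc} for the coherent case, and what you have written is precisely an unpacking of those citations---redoing the argument of Proposition~\ref{prop_scalar-cc} on the restricted domain $M_s$ with acceptance set $A_{t,s}$, which forces the dual constraint to relax from $w_t^T(\Q,w)\in\LdpK{q}{}{\R^d_+}$ to $w_t^s(\Q,w)\in M_{s,+}^+$.
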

\begin{proof}
The convex case is provided by \cite[Corollary~A.1.8]{FR15-supermtg}.  The coherent case follows identically from the logic of Corollary~\ref{cor_scalar-cc}.
\end{proof}

We now consider the case in which the underlying set-valued risk measure satisfies a stronger continuity property.  This is undertaken so as to find the dual representation is attained by some choice of dual variables $(\Q,m_{\perp})$. 

\begin{proposition}
\label{prop_attained}
Consider the setting of Proposition~\ref{prop_scalar-cc} such that $R_t$ is additionally h.l.c. 
Let $w \in \LdpK{q}{t}{\tilde R_t(0)^+ \backslash \prp{M}}$, then for every $X \in \LdpF{}$
\begin{equation*}
\rho_t^w(X) = \essmax_{(\Q,m_{\perp}) \in \W_t(w)} \lrparen{-\alpha_t(\Q,w+m_{\perp}) + \transp{w+m_{\perp}}\EQt{-X}{t}}
\end{equation*}
and for any time $s \in (t,T]$ and every $X \in M_s$
\begin{equation*}
\rho_{t,s}^w(X) = \essmax_{(\Q,m_{\perp}) \in \W_{t,s}(w)} \lrparen{-\alpha_{t,s}(\Q,w + m_{\perp}) + \transp{w + m_{\perp}}\EQt{-X}{t}}.
\end{equation*}
If $R_t$ is additionally conditionally coherent, then for every $X \in \LdpF{}$
\begin{equation*}
\rho_t^w(X) = \essmax_{(\Q,m_{\perp}) \in \W_t^R(w)} \transp{w + m_{\perp}}\EQt{-X}{t}
\end{equation*}
and for any time $s \in (t,T]$ and every $X \in M_s$
\begin{equation*}
\rho_{t,s}^w(X) = \essmax_{(\Q,m_{\perp}) \in \W_{t,s}^R(w)} \transp{w + m_{\perp}}\EQt{-X}{t}.
\end{equation*}
\end{proposition}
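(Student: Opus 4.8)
The plan is to upgrade the $\esssup$ of Proposition~\ref{prop_scalar-cc} to an attained $\essmax$ by showing that, under the additional h.l.c.\ hypothesis, the conditional convex function $X \mapsto \rho_t^w(X)$ is continuous in the topology carried by $\LdpF{}$, hence conditionally subdifferentiable, and that a subgradient furnishes the maximizing dual variable. Throughout, $w \in \LdpK{q}{t}{\tilde R_t(0)^+ \backslash \prp{M}}$ is fixed; by Assumption~\ref{ass:finite} we have $\rho_t^w(X) \in \LoF{t}$, and by Assumption~\ref{ass:convex} the scalarization is monotone and conditionally convex with the dual representation of Proposition~\ref{prop_scalar-cc} already in hand.

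First I would observe that h.l.c.\ is precisely the missing semicontinuity. For $\gamma \in \LoF{t}$ the set $D_\gamma := \lrcurly{u \in M_t \; | \; \trans{w}u \geq \gamma \; \Pas}$ is a closed halfspace in $\mathcal{G}(M_t;M_{t,+})$, since $w \in M_{t,+}^+$ ensures $D_\gamma + M_{t,+} \subseteq D_\gamma$; and by the definition of the scalarization, $R_t(X) \subseteq D_\gamma$ holds if and only if $\rho_t^w(X) \geq \gamma$. Hence $R_t^+[D_\gamma] = \lrcurly{X \in \LdpF{} \; | \; \rho_t^w(X) \geq \gamma}$, and h.l.c.\ says exactly that all such superlevel sets are closed, i.e.\ $\rho_t^w$ is upper semicontinuous (the $\Ft{t}$-measurable, possibly random, normal $w$ is admissible because a conditionally convex $R_t$ is local). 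On the other hand, the representation of Proposition~\ref{prop_scalar-cc} presents $\rho_t^w$ as an essential supremum of continuous affine functionals, so $\rho_t^w$ is lower semicontinuous. A map that is both lower and upper semicontinuous is continuous in the chosen topology (the norm topology for $p \in [1,+\infty)$, the weak* topology $\sigma(\LdiF{},\LdoF{})$ for $p = +\infty$).

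From continuity I would deduce that the conditional subdifferential of $\rho_t^w$ at $X$ is nonempty, via the measurable-selector construction of \cite{FR13-survey} and the argument of \cite[Corollary~2.5]{FP06}. In the weak* case the continuity is $\sigma(\LdiF{},\LdoF{})$-continuity, so a subgradient is represented by an element of the predual $\LdoF{t}$ rather than a merely finitely additive functional; this is where h.l.c.\ does its essential work. Writing such a subgradient as $Z \in \LdpK{q}{t}{\R^d_+}$, monotonicity forces $Z \geq 0$ while $M_t$-translativity fixes $\Et{Z}{t} = w + m_\perp^*$ for some $m_\perp^* \in \prp{M_t}$; normalizing the components then yields $\Q^* \in \mathcal{M}^d$ with $Z = w_t^T(\Q^*,w+m_\perp^*)$ and $(\Q^*,m_\perp^*) \in \W_t(w)$, and this pair attains the supremum $\Pas$, giving the first $\essmax$ formula. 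The coherent version is then immediate, since by Corollary~\ref{cor_scalar-cc} the penalty $\alpha_t(\Q,w+m_\perp)$ is $\{0,+\infty\}$-valued and the attaining variable must therefore lie in $\W_t^R(w)$. The two stepped statements follow verbatim after replacing $A_t$, $\alpha_t$, $\W_t(w)$ by $A_{t,s}$, $\alpha_{t,s}$, $\W_{t,s}(w)$ and invoking Corollary~\ref{cor_scalar-stepped} in place of Proposition~\ref{prop_scalar-cc}, as $R_{t,s}$ inherits h.l.c.\ from $R_t$ on the subspace $M_s$.

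The main obstacle I anticipate is the $p = +\infty$ case. There, finiteness together with lower semicontinuity alone does not force weak* continuity, since the weak* topology is not a Banach-space topology; one genuinely needs h.l.c.\ to obtain upper semicontinuity and thereby keep the maximizing dual element inside the predual $\LdoF{t}$ of countably additive measures rather than in the bidual. Moreover the attainment must be arranged to hold $\Pas$ simultaneously across $\omega$, which requires the conditional, $\Ft{t}$-decomposable subdifferential and a measurable selection, rather than a pointwise application of the unconditional subdifferentiability theorem.
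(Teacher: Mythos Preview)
Your overall strategy---show continuity of the scalarization and then extract a maximizing dual element via a subgradient---is a legitimate alternative to the paper's route, but the key technical step contains a gap. The set
\[
D_\gamma := \lrcurly{u \in M_t \; \middle| \; \trans{w}u \geq \gamma \; \Pas}
\]
with $\gamma \in \LoF{t}$ is \emph{not} a closed halfspace of the Banach space $M_t$ in the sense required by the definition of h.l.c.: a closed halfspace there is of the form $\{u \in M_t : \E{\trans{v}u} \geq c\}$ for some $v \in M_{t,+}^+$ and a real constant $c$, whereas your $D_\gamma$ is an intersection of (in general uncountably many) such halfspaces, i.e.\ a conditional or $L^0$-module halfspace. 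Consequently h.l.c.\ as stated does not directly yield closedness of $R_t^+[D_\gamma]$, and your identification of upper semicontinuity of the $\LoF{t}$-valued map $X \mapsto \rho_t^w(X)$ with h.l.c.\ is not justified. The subsequent appeal to a conditional subdifferential is also underspecified (the references you cite do not supply this result in the needed form), and the subgradient should live in $\LdqF{}$, not $\LdpK{q}{t}{\R^d_+}$.

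The paper avoids this difficulty by passing first to the real-valued functional $\bar\rho_t^w(X) := \E{\rho_t^w(X)} = \inf_{u \in R_t(X)} \E{\trans{w}u}$. Lower semicontinuity of $\bar\rho_t^w$ comes from c.u.c., and upper semicontinuity is obtained by applying h.l.c.\ to the \emph{genuine} closed halfspace $\{u \in M_t : \E{\trans{w}u} \geq \bar\rho_t^w(X) + \epsilon\}$, whose complement is an open neighborhood of a near-minimizer $u_\epsilon \in R_t(X)$; openness of the inverse image $R_t^-[\,\cdot\,]$ then gives $\bar\rho_t^w(Y) < \bar\rho_t^w(X) + \epsilon$ on a neighborhood of $X$. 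With $\bar\rho_t^w$ continuous, standard real-valued convex duality (Zalinescu) gives a maximizer $(\Q^*,m_\perp^*)$ for the expectation form of the dual representation. The lift to the conditional $\essmax$ is then a one-line argument: the conditional dual already gives $\rho_t^w(X) \geq -\alpha_t(\Q^*,w+m_\perp^*) + \transp{w+m_\perp^*}\EPt{\Q^*}{-X}{t}$ a.s., and the expectations of both sides agree by construction and $\Ft{t}$-decomposability of $A_t$, forcing equality a.s. This reduction to the unconditional problem is exactly what circumvents the need for a conditional subdifferential theory and keeps h.l.c.\ applied only to bona fide halfspaces.
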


\begin{proof}
The proof is given in Appendix~\ref{sec:App4.1}.
\end{proof}

\section{Multiportfolio time consistency}
\label{sec:mptc}

Given the results of the previous section on the representation of the multivariate scalarized risk measures, we now wish to consider how these risk measures relate through time.  To do this we consider multiportfolio time consistency, which was defined for set-valued risk measures in \cite{FR12,FR12b}.  The main result of this paper is in determining an equivalent time consistency property for the scalarizations to multiportfolio time consistency for the set-valued risk measure. The formal definition of this is provided below.  For our purposes, we are interested in this property due to the additive property for acceptance sets (Theorem~\ref{thm_mptc_equiv}\eqref{thm_equiv_acceptance}) as this gives a direct comparison to the scalar notion.  The main result of this section is Theorem~\ref{thm_scalar_recursive} in which a recursive formulation is provided for the scalarized risk measures.

\begin{definition}[Definition~2.7 of \cite{FR12b}]
\label{defn_mptc}
A dynamic risk measure $\seq{R}$ is called \textbf{\emph{multiportfolio time consistent}} if for all times $0 \leq t < s \leq T$, all portfolios $X\in \LdpF{}$ and all sets ${\bf Y}\subseteq \LdpF{}$ the implication
\begin{equation}
  R_s(X) \subseteq \bigcup_{Y \in {\bf Y}} R_s(Y) \Rightarrow R_t(X) \subseteq \bigcup_{Y \in {\bf Y}} R_t(Y)
\end{equation}
is satisfied.
\end{definition}
Multiportfolio time consistency means that if at some time any risk compensation portfolio for $X$ also compensates the risk of some portfolio $Y$ in the set ${\bf Y}$, then at any prior time the same relation should hold true.

The following theorem gives equivalent characterizations of multiportfolio time consistency: a recursion in the spirit of Bellman's principle (Property~\eqref{thm_equiv_recursive} below) and an additive property for the acceptance sets (Property~\eqref{thm_equiv_acceptance}).
These properties are important for the construction of multiportfolio time consistent risk measures.
\begin{theorem}[Theorem~3.4 of \cite{FR12}]
\label{thm_mptc_equiv}
For a normalized dynamic risk measure $\seq{R}$ the following are equivalent:
\begin{enumerate}
\item \label{thm_equiv_mptc} $\seq{R}$ is multiportfolio time consistent.
\item \label{thm_equiv_recursive} $R_t$ is recursive, that is for every time $0 \leq t < s \leq T$
    \begin{equation}
    \label{recursive}
        R_t(X) = \bigcup_{Z \in R_s(X)} R_t(-Z) =: R_t(-R_s(X)).
    \end{equation}
\item \label{thm_equiv_acceptance} for every time $0 \leq t < s \leq T$
    \begin{equation}
    \label{sum_acceptance}
        A_t = A_{t,s} + A_s.
    \end{equation}
\end{enumerate}
\end{theorem}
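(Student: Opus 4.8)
The plan is to establish the cycle of equivalences by proving $\eqref{thm_equiv_mptc}\Leftrightarrow\eqref{thm_equiv_recursive}$ and $\eqref{thm_equiv_recursive}\Leftrightarrow\eqref{thm_equiv_acceptance}$ separately. The engine for the first equivalence is the ``self-recursion'' identity
\[ R_s(-R_s(X)) = \bigcup_{Z \in R_s(X)} R_s(-Z) = R_s(X), \]
valid for any normalized risk measure: by $M_s$-translativity each $R_s(-Z) = R_s(0) + Z$ for $Z \in R_s(X) \subseteq M_s$, so the union equals the Minkowski sum $R_s(0) + R_s(X)$, which is $R_s(X)$ by normalization. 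I would record this identity, together with its consequence $R_s(-Z) = R_s(0)+Z \subseteq R_s(0)+R_s(X) = R_s(X)$ for every $Z \in R_s(X)$, as a preliminary observation, since it is exactly what lets me feed well-chosen test families into the multiportfolio time consistency implication without having to verify the hypotheses by hand.

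For $\eqref{thm_equiv_mptc}\Rightarrow\eqref{thm_equiv_recursive}$ I would prove the two inclusions of \eqref{recursive} by two applications of Definition~\ref{defn_mptc}. For ``$\subseteq$'', apply multiportfolio time consistency with the test set $\mathbf{Y} = \{-Z \; | \; Z \in R_s(X)\}$: its hypothesis $R_s(X) \subseteq \bigcup_{Z \in R_s(X)} R_s(-Z)$ is the self-recursion identity, so the conclusion $R_t(X) \subseteq \bigcup_{Z \in R_s(X)} R_t(-Z) = R_t(-R_s(X))$ follows for free. For ``$\supseteq$'', fix $Z \in R_s(X)$ and apply multiportfolio time consistency with the singleton $\mathbf{Y} = \{X\}$ to the argument $-Z$; the hypothesis $R_s(-Z) \subseteq R_s(X)$ is the inclusion noted above, so I obtain $R_t(-Z) \subseteq R_t(X)$, and the union over $Z \in R_s(X)$ gives $R_t(-R_s(X)) \subseteq R_t(X)$. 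The reverse implication $\eqref{thm_equiv_recursive}\Rightarrow\eqref{thm_equiv_mptc}$ is then a short iteration: assuming $R_s(X) \subseteq \bigcup_{Y \in \mathbf{Y}} R_s(Y)$, for each $Z \in R_s(X)$ pick $Y \in \mathbf{Y}$ with $Z \in R_s(Y)$, so that $R_t(-Z) \subseteq R_t(-R_s(Y)) = R_t(Y)$ by \eqref{recursive}; unioning over $Z$ and using \eqref{recursive} for $X$ yields $R_t(X) = R_t(-R_s(X)) \subseteq \bigcup_{Y \in \mathbf{Y}} R_t(Y)$.

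The equivalence $\eqref{thm_equiv_recursive}\Leftrightarrow\eqref{thm_equiv_acceptance}$ I would handle entirely at the level of acceptance sets, using the correspondence $0 \in R_t(Y) \Leftrightarrow Y \in A_t$, the primal representations $R_s(X) = \{u \in M_s \mid X + u \in A_s\}$ and $R_t(Y) = \{u \in M_t \mid Y + u \in A_t\}$, the definition $A_{t,s} = A_t \cap M_s$, and the inclusion $M_t \subseteq M_s$ coming from $\Ft{t} \subseteq \Ft{s}$. Reading \eqref{recursive} at ``$0 \in R_t(X)$'' shows $X \in A_t$ iff there is $Z \in R_s(X)$ with $-Z \in A_t$; writing $a = -Z \in A_t \cap M_s = A_{t,s}$ and $b = X + Z \in A_s$, this says $X = a + b \in A_{t,s} + A_s$, giving $\eqref{thm_equiv_recursive}\Rightarrow\eqref{thm_equiv_acceptance}$. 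Conversely, given $A_t = A_{t,s} + A_s$, for $u \in R_t(X)$ I decompose $X + u = a + b$ with $a \in A_{t,s}$, $b \in A_s$ and set $Z = u - a \in M_s$, checking $X + Z = b \in A_s$ (so $Z \in R_s(X)$) and $-Z + u = a \in A_t$ (so $u \in R_t(-Z)$); the opposite inclusion is the same computation read backward, which reproduces \eqref{recursive}.

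The genuinely delicate point, and where I would spend the most care, is the step $\eqref{thm_equiv_mptc}\Rightarrow\eqref{thm_equiv_recursive}$: the whole argument rests on recognizing that the hypotheses needed to invoke Definition~\ref{defn_mptc} for the two test families $\mathbf{Y} = \{-Z \mid Z \in R_s(X)\}$ and $\mathbf{Y} = \{X\}$ are supplied automatically by normalization and translativity. A secondary subtlety running through the acceptance-set equivalence is the direction of the inclusion $M_t \subseteq M_s$, which is what makes $-Z + u$ lie in $M_s$ and places the stepped acceptance set $A_{t,s}$ correctly; reversing it would break the decomposition $X = a + b$.
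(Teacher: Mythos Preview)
Your proof is correct. Note, however, that the paper does not actually prove this theorem: it is quoted verbatim as Theorem~3.4 of \cite{FR12}, with no proof supplied here, so there is no ``paper's own proof'' against which to compare. Your argument---the self-recursion identity $R_s(-R_s(X))=R_s(X)$ from normalization and translativity to drive the equivalence $\eqref{thm_equiv_mptc}\Leftrightarrow\eqref{thm_equiv_recursive}$, and the acceptance-set bookkeeping via $A_{t,s}=A_t\cap M_s$ and $M_t\subseteq M_s$ for $\eqref{thm_equiv_recursive}\Leftrightarrow\eqref{thm_equiv_acceptance}$---is the standard route and matches the proof in the original reference \cite{FR12}.
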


We now come to the main result of this paper.  Namely we wish to relate multiportfolio time consistency to a property for the family of scalarizations.  Notably we find a recursive formulation that relies on the dual representation of the scalarizations.  However, this recursion does not follow a single scalarization with directions $w \in M_+^+ \backslash \prp{M}$ throughout all time, but requires a moving scalarization.  This notion of moving scalarizations for time consistency aligns with recent results for dynamic multi-objective programs \cite{KMZ17,KR18}. In many cases a specific scalarization $w \in R_0(0)^+ \backslash \prp{M}$ at time $0$ is desired, for instance $w = \vec{1}$ in the case of systemic risk.  The below theorem shows that even with this specific scalar multivariate risk measure at time $0$, the entire family of scalarizations (in the future) is needed to obtain a recursive formulation derived from the usual time consistency property of the acceptance sets, that is, $A_t = A_{t,s} + A_s$, i.e., multiportfolio time consistent.
\begin{theorem}
\label{thm_scalar_recursive}
$\seq{R}$ is multiportfolio time consistent if and only if for any times $0 \leq t < s \leq T$, any $w \in \LdpK{q}{t}{\tilde R_t(0)^+ \backslash \prp{M}}$, and any $X \in \LdpF{}$
\begin{equation}
\label{eq_scalar_recursive}
\rho_t^w(X) = \esssup_{\substack{(\Q,m_{\perp}) \in \W_{t,s}(w)\\ w_t^s(\Q,w+m_{\perp}) \in R_s(0)^+}} \lrparen{-\alpha_{t,s}(\Q,w+m_{\perp}) +
\Et{\rho_s^{w_t^s(\Q,w+m_{\perp})}(X)}{t}}.
\end{equation}
If $\seq{R}$ is additionally conditionally coherent then for any times $0 \leq t < s \leq T$, any $w \in \LdpK{q}{t}{\tilde R_t(0)^+ \backslash \prp{M}}$, and any $X \in \LdpF{}$
\[\rho_t^w(X) = \esssup_{\substack{(\Q,m_{\perp}) \in \W_{t,s}(w)\\ w_t^s(\Q,w+m_{\perp}) \in R_s(0)^+ \cap A_{t,s}^+}} \Et{\rho_s^{w_t^s(\Q,w+m_{\perp})}(X)}{t}\]
where $\rho_s^{w_s}(X) \equiv -\infty$ if $\W_s^R(w_s) = \emptyset$. 
\end{theorem}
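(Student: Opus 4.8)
The plan is to prove both implications through a single algebraic identity for the penalty functions, anchored on the extended dual representation of Proposition~\ref{prop_extended_dual} (which holds unconditionally) and on the equivalence of multiportfolio time consistency with the additive acceptance-set relation $A_t = A_{t,s}+A_s$ recorded in Theorem~\ref{thm_mptc_equiv}\eqref{thm_equiv_acceptance}. The whole argument reduces to showing that the unconditional penalty $\beta_{t,s}^w$ of Proposition~\ref{prop_extended_dual} factorizes exactly when $A_t = A_{t,s}+A_s$, and that this factorization is nothing but the additivity of support functions over a Minkowski sum.

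First I would establish the key identity: whenever $A_t = A_{t,s}+A_s$, writing $v_s := w_t^s(\Q,w+m_\perp)+n_\perp$,
\[\beta_{t,s}^w(\Q,m_\perp,\R,n_\perp) = \alpha_{t,s}(\Q,w+m_\perp) + \Et{\alpha_s(\R,v_s)}{t}.\]
To see this I would decompose each $Z \in A_t$ as $Z = Z_1 + Z_2$ with $Z_1 \in A_{t,s} \subseteq M_s$ and $Z_2 \in A_s$; since the integrand is additive in $Z$, the essential supremum over the Minkowski sum factorizes into a sum of two essential suprema, which is where the $\Ft{t}$-decomposability of the acceptance sets is used. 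For the $Z_1$-term I would use that $n_\perp \in \prp{M_s}$ annihilates $Z_1 \in M_s$, that $Z_1$ is $\Ft{s}$-measurable, and that each component of $\xi_{s,T}(\R)$ has $\Ft{s}$-conditional mean $\1$, collapsing it to $\esssup_{Z_1 \in A_{t,s}} \transp{w+m_\perp}\EQt{-Z_1}{t} = \alpha_{t,s}(\Q,w+m_\perp)$. For the $Z_2$-term I would use the tower property and the identity $\Et{\trans{w_s^T(\R,v_s)}(-Z_2)}{s} = \trans{v_s}\ERt{-Z_2}{s}$, then pull the essential supremum through $\Et{\cdot}{t}$ by $\Ft{s}$-decomposability of $A_s$ to obtain $\Et{\alpha_s(\R,v_s)}{t}$.

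Given this identity, the forward implication is a substitution: inserting the split penalty into Proposition~\ref{prop_extended_dual} and recognizing, via the same conditional-expectation manipulation together with Proposition~\ref{prop_scalar-cc} applied at time $s$, the inner essential supremum over $(\R,n_\perp) \in \W_s(w_t^s(\Q,w+m_\perp))$ as $\Et{\rho_s^{w_t^s(\Q,w+m_\perp)}(X)}{t}$; I would then restrict the outer index set to $\{w_t^s(\Q,w+m_\perp) \in R_s(0)^+\}$, since for directions outside the dual cone of the (conical, by normalization) set $R_s(0)$ the scalarization $\rho_s^{w_t^s(\Q,w+m_\perp)}$ is identically $-\infty$ and does not contribute. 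This yields \eqref{eq_scalar_recursive}. For the converse I would run the computation in reverse: comparing \eqref{eq_scalar_recursive} (after expanding $\rho_s$ by Proposition~\ref{prop_scalar-cc} at time $s$) with the unconditional representation of Proposition~\ref{prop_extended_dual} forces the two canonical penalties to agree, i.e.\ $\beta_{t,s}^w = \alpha_{t,s}(\Q,w+m_\perp)+\Et{\alpha_s(\R,v_s)}{t}$ on their effective domain; since $\beta_{t,s}^w$ is the conditional support function of $A_t$ and the right-hand side is the sum of the support functions of $A_{t,s}$ and $A_s$, hence the support function of $A_{t,s}+A_s$, the injectivity recorded in \eqref{eq_set-scalar} gives $A_t = A_{t,s}+A_s$, and Theorem~\ref{thm_mptc_equiv} then delivers multiportfolio time consistency. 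The conditionally coherent statement follows from the convex one exactly as in Corollaries~\ref{cor_scalar-cc} and~\ref{cor_scalar-stepped}: since $\alpha_{t,s}(\Q,w+m_\perp) \in \{0,+\infty\}$ with value $0$ precisely when $w_t^s(\Q,w+m_\perp) \in A_{t,s}^+$, the penalty term vanishes and contracts the index set to $\{w_t^s(\Q,w+m_\perp) \in R_s(0)^+ \cap A_{t,s}^+\}$, with the convention $\rho_s^{w_s} \equiv -\infty$ on $\{\W_s^R(w_s) = \emptyset\}$ absorbing the coherent normalization.

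The main obstacle I anticipate is the bookkeeping in the key identity, specifically justifying the two interchanges of essential suprema with conditional expectations and the factorization of the essential supremum over the Minkowski sum $A_{t,s}+A_s$, all of which rest on the $\Ft{t}$- and $\Ft{s}$-decomposability (locality) of the acceptance sets, together with the domain subtlety that the admissible stepped directions $w_t^s(\Q,w+m_\perp)$ must be confined to $R_s(0)^+$ for the inner essential supremum to be identifiable with the time-$s$ scalarization rather than collapsing to $-\infty$.
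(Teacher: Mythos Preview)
Your forward direction coincides with the paper's route: the penalty identity $\beta_{t,s}^w = \alpha_{t,s} + \Et{\alpha_s}{t}$ under $A_t = A_{t,s}+A_s$ is exactly the implication \eqref{thm_scalar-acceptance}$\Rightarrow$\eqref{thm_scalar-penalty} of Lemma~\ref{lemma_scalar}, and substituting it into Proposition~\ref{prop_extended_dual} to reconstitute $\Et{\rho_s^{w_t^s(\Q,w+m_\perp)}(X)}{t}$ is \eqref{thm_scalar-penalty}$\Rightarrow$\eqref{thm_scalar-recursive}. The coherent statement is also treated the same way.

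Your converse, however, departs from the paper and contains a genuine gap. The paper does not pass back through the penalty identity; it establishes each inclusion $A_t \subseteq A_{t,s}+A_s$ and $A_t \supseteq A_{t,s}+A_s$ directly from the corresponding inequality in \eqref{eq_scalar_recursive}. For the first, it takes $X \in A_t \setminus (A_{t,s}+A_s)$, separates $X$ from the (closed, by Assumption~\ref{ass:convex}) sum, identifies the separating functional with a pair $(\Q,w+m_\perp) \in \W_t$, and derives the contradiction $0 < \E{\rho_t^w(X)} \leq 0$. For the second it uses the elementary Proposition~\ref{prop_mptc_acceptance} to show that $X \in A_{t,s}+A_s$ forces the recursive right-hand side to be nonpositive, hence $\rho_t^w(X) \leq 0$ for all admissible $w$, hence $X \in A_t$.

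Your route instead asserts that having two dual representations of the same $\rho_t^w$ ``forces the two canonical penalties to agree.'' This is not automatic: from $\rho_t^w(X) = \esssup(-\beta + L) = \esssup(-\gamma + L)$ for all $X$ one obtains only that the lower semicontinuous convex envelopes of $\beta$ and $\gamma$ coincide. To conclude $\beta = \gamma$ you must argue that both are already lsc and convex \emph{as functions of the effective dual direction} $Y = w_s^T(\R,v_s)$, and that as $(w,\Q,m_\perp,\R,n_\perp)$ range over their admissible sets the resulting directions $Y$ exhaust enough of $\LdpK{q}{}{\R^d_+}$ to recover closed convex sets from their support functions. These facts are available here (both penalties are conditional support functions, and $A_{t,s}+A_s$ is closed under the standing c.u.c.\ assumption), but they have to be established and invoked; as written, the step is a claim rather than a proof. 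The reference to \eqref{eq_set-scalar} does not fill the gap either: that identity reconstructs $R_t(X)$ from its scalarizations $\rho_t^w(X)$, whereas what you need is the recovery of the acceptance sets $A_t$ and $A_{t,s}+A_s$ from the equality of their conditional support functions in the directions $Y$ actually attained.
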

\begin{proof}
The convex setting is an immediate consequence of Theorem~\ref{thm_mptc_equiv} and Lemma~\ref{lemma_scalar}.  If $\seq{R}$ is additionally conditionally coherent then the result is a direct consequence of $\beta_{t,s}^w(\Q,m_{\perp},\R,n_{\perp}) = 0$ on $\{w_s^T(\R,w_t^s(\Q,w+m_{\perp})+n_{\perp}) \in A_t^+\}$ and is $-\infty$ otherwise.
\end{proof}
Note that the recursive formulation~\eqref{eq_scalar_recursive} takes the $\P$-conditional expectation of the future risk measures $\Et{\rho_s^{w_t^s(\Q,w+m_{\perp})}(X)}{t}$ rather than the $\Q$-conditional expectation as it appears in the dual representation and as would be expected in the univariate setting.  However, this recursive formulation can equivalently be given under the $\Q$-conditional expectation with the modification
\[\rho_t^w(X) = \esssup_{\substack{(\Q,m_{\perp}) \in \W_{t,s}(w)\\ w_t^s(\Q,w+m_{\perp}) \in R_s(0)^+}} \lrparen{\begin{array}{l}-\alpha_{t,s}(\Q,w+m_{\perp})\\ \quad + \transp{w+m_{\perp}}\EQt{w_t^s(\Q,w+m_{\perp})^{\ddag} \rho_s^{w_t^s(\Q,w+m_{\perp})}(X)}{t}\end{array}}\]
where $v^{\ddag} := v(\trans{v}v)^{-1}$ on $\{\trans{v}v > 0\}$ and $0$ on $\{\trans{v}v = 0\}$ is an extension of the usual pseudoinverse.

\begin{remark}
\label{rem_fullM_recursive}
As in Remark~\ref{rem_fullM}, if $M = \R^d$ then $\prp{M} = \{0\}$.  The recursive formulation thus can be simplified as it no longer needs to consider the second dual variable $m_{\perp} \in \prp{M_t}$.  In fact, in the setting with $M = \R^d$, multiportfolio time consistency is equivalent to the recursive formulation: for any times $0 \leq t < s \leq T$, any $w \in \LdpK{q}{t}{\tilde R_t(0)^+ \backslash \prp{M}}$, and any $X \in \LdpF{}$
\begin{equation*}
\rho_t^w(X) = \esssup_{\substack{\Q \in \W_{t,s}^d(w)\\ w_t^s(\Q,w) \in R_s(0)^+}} \lrparen{-\alpha_{t,s}(\Q,w) +
\Et{\rho_s^{w_t^s(\Q,w)}(X)}{t}}.
\end{equation*}
\end{remark}

\begin{remark}
\label{rem:moving_scalar}
Consider a discrete time setting.
When the essential supremum in the recursive formulation is attained (e.g., under h.l.c.\ of the underlying set-valued risk measure $R_t$ as shown in Proposition~\ref{prop_attained} or if the probability space if finite) then the sequence of moving scalarizations can be made explicit through the maximizing dual variables over time.  
In particular, this maximizing sequence exactly provides the moving scalarizations given a portfolio $X \in \LdpF{}$ and an initial (time 0) normal direction $w \in R_0(0)^+ \backslash \prp{M}$. The moving scalarizations are then provided by the sequence $\seq{w}$ where $w_0 = w$ and $w_{t+1} = w_t^{t+1}(\Q^t , w_t + m_{\perp}^t)$ and $(\Q^t,m_{\perp}^t)$ are the maximizers of
\[(\Q^t,m_{\perp}^t) \in \argessmax_{\substack{(\Q,m_{\perp}) \in \W_{t,t+1}(w_t)\\ w_t^{t+1}(\Q,w_t+m_{\perp}) \in R_{t+1}(0)^+}} \lrparen{-\alpha_{t,t+1}(\Q,w_t+m_{\perp}) + \Et{\rho_{t+1}^{w_t^{t+1}(\Q,w_t+m_{\perp})}(X)}{t}}.\]
However, in the general case, though we find the notion of moving scalarizations, there may not exist an actual sequence of scalarizations that are consistent in time as the (essential) supremum might not be attained.
\end{remark}

\begin{remark}
\label{rem_m=1}
Consider the special case that $M = \R \times \{0\}^{d-1}$ and $p = +\infty$ from \cite{FR18-1asset}.  In that paper it was shown that multiportfolio time consistency (under this single asset eligible space) was equivalent to
\[\rho_s^{e_1}(X) \leq \rho_s^{e_1}(Y) \; \Rightarrow \; \rho_t^{e_1}(X) \leq \rho_t^{e_1}(Y) \quad \forall t < s\]
where $e_1 = \transp{1,0,...,0} \in \R^d$.  
In fact, by construction we can find that $\rho_t^w(X) = w_1 \rho_t^{e_1}(X)$ for any $w \in \LdpK{q}{t}{M_+^+ \backslash \prp{M}}$ (further noting that $R_t(0) = M_{t,+}$ in this setting by Assumption~\ref{ass:convex}).
Additionally, since $e_1 \in \LdiF{t}$ for all times $t$, we can modify the proof of Proposition~\ref{prop_finite} in order to show that $\rho_t^{e_1}(X) \in \LiF{t}$ for all times $t$.
Therefore the recursive formulation of Theorem~\ref{thm_scalar_recursive} simplifies to
\begin{align*}
\rho_t^{e_1}(X) &= \esssup_{(\Q,m_{\perp}) \in \W_{t,s}(e_1)} \lrparen{-\alpha_{t,s}(\Q,e_1+m_{\perp}) + \EQnt{1}{\rho_s^{e_1}(X)}{t}}\\
&= \esssup_{(\Q,m_{\perp}) \in \W_{t,s}(e_1)} \lrparen{-\alpha_{t,s}(\Q,e_1+m_{\perp}) + \transp{e_1+m_{\perp}}\EQt{\rho_s^{e_1}(X)e_1}{t}}\\
&= \rho_{t,s}^{e_1}(-\rho_s^{e_1}(X)e_1)
\end{align*}
and no other scalarizations need to be considered.
While this definition and recursive formulation match the usual ones in the literature for time consistency for multivariate risk measures (see \cite{HMBS16}), we wish to note that the usual examples, such as the superhedging risk measure, do not satisfy multiportfolio time consistency in this setting (i.e.\ for $M = \R \times \{0\}^{d-1}$) with $d > 1$.
\end{remark}

We conclude this section by presenting the formal results relating the multiportfolio time consistency and the recursive structure for scalarizations. In doing so we also deduce the cocycle condition on the penalty functions as in the $d = 1$ case of \cite{FP06,CDK06,BN08,BN09}.  These below results provide us with the foundation for the proof of Theorem~\ref{thm_scalar_recursive}.
\begin{lemma}
\label{lemma_scalar}
Fix times $0 \leq t < s \leq T$.  The following are equivalent:
\begin{enumerate}
\item\label{thm_scalar-acceptance} $A_t \subseteq A_{t,s} + A_s$ (resp.\ $\supseteq$);
\item\label{thm_scalar-recursive} For every $X \in \LdpF{}$ and $w \in \LdpK{q}{t}{\tilde R_t(0)^+ \backslash \prp{M}}$
\[\rho_t^w(X) \geq \esssup_{\substack{(\Q,m_{\perp}) \in \W_{t,s}(w)\\ w_t^s(\Q,w+m_{\perp}) \in R_s(0)^+}} \lrparen{\begin{array}{l} -\alpha_{t,s}(\Q,w+m_{\perp})\\ \quad + \Et{\rho_s^{w_t^s(\Q,w+m_{\perp})}(X)}{t} \end{array}} \quad \text{(resp.\ $\leq$)};\] 
\item\label{thm_scalar-penalty} For every $w \in \LdpK{q}{t}{\tilde R_t(0)^+ \backslash \prp{M}}$, $(\Q,m_{\perp}) \in \W_{t,s}(w)$, and $(\R,n_{\perp}) \in \W_s(w_t^s(\Q,w+m_{\perp}))$
\[\begin{array}{ll} \beta_{t,s}^w(\Q,m_{\perp},\R,n_{\perp}) &\leq \alpha_{t,s}(\Q,w+m_{\perp})\\ &\qquad + \Et{\alpha_s(\R,w_t^s(\Q,w+m_{\perp})+n_{\perp})}{t} \end{array} \quad \text{(resp.\ $\geq$)},\]
where
\[\beta_{t,s}^w(\Q,m_{\perp},\R,n_{\perp}) = \esssup_{Z \in A_t} \Et{\trans{w_s^T(\R,w_t^s(\Q,w+m_{\perp})+n_{\perp})}(-Z)}{t}\]
for any $(\Q,m_{\perp}) \in \W_t(w)$ and $(\R,n_{\perp}) \in \W_s(w_t^s(\Q,w+m_{\perp}))$, which arises in the auxiliary dual representation utilizing stepped dual variables as given in Proposition~\ref{prop_extended_dual}.
\end{enumerate}
\end{lemma}
\begin{proof}
The proof is given in Appendix~\ref{sec:App5}.
\end{proof}

\section{Examples}
\label{sec:examples}
Many examples of multiportfolio time consistent set-valued risk measures have been presented in the literature.  See \cite{FR12,FR12b,FR14-alg} for some specific risk measures.  In this section we will consider two examples in detail.  First, we will introduce the superhedging risk measure with proportional and convex transaction costs to demonstrate the recursive formulation provided in Theorem~\ref{thm_scalar_recursive}.  Second, we will consider a composed risk measure (as done in \cite{CK10} for univariate risk measures), with an emphasis on the composed average value at risk.
For these examples we will restrict ourselves to the full eligible space $M = \R^d$ and with $p = +\infty$ for simplicity.  In particular, this allows us to drop the second dual variable $m_{\perp}$ as it must be 0 as discussed in Remark~\ref{rem_fullM}.

\subsection{Superhedging risk measure}
\label{sec:shp}
Consider a market in discrete time with convex transaction costs, e.g.\ with a bid-ask spread and market impacts from trading.  We will model this market via a sequence of convex solvency regions $\seq{K}$ for $\Ft{t}$-measurable closed and convex random sets $K_t$ where $\R^d_+ \subseteq K_t[\omega] \subsetneq \R^d$ for almost any $\omega \in \Omega$ and time $t \in \{0,1,...,T\}$.  The solvency regions denote those portfolios that can be traded in the market into the 0 portfolio (with the ability to ``throw away'' assets).  If $K_t$ is additionally conical then this market is one with proportional transaction costs only.

The superhedging risk measure, as described in \cite{FR12,FR14-alg,LR11}, is provided by the primal representation
\[SHP_t(X) = \lrcurly{u \in \LdiF{t} \; | \; X + u \in \sum_{s = t}^T \LdiK{s}{K_s}}\]
for any $X \in \LdiF{}$ and any time $t$.

Under a suitable no-arbitrage argument (no-scalable robust no-arbitrage \cite{PP10}) the superhedging risk measure is closed.  Though the superhedging risk measure may not be c.u.c., all results of this work still hold due to the specific structure of this acceptance set and a no-arbitrage argument.  Thus we are able to determine the primal and dual representations for the scalarizations of the superhedging risk measure at time $t$ given a normal direction $w \in \LdoK{t}{\R^d_+ \backslash \{0\}}$:
\begin{align*}
shp_t^w(X) &= \essinf\lrcurly{\trans{w}u \; | \; u \in \LdiF{t}, \; X + u \in \sum_{s = t}^T \LdiK{s}{K_s}}\\
&= \esssup_{\Q \in \W_t^d(w)} \lrparen{\sum_{s = t}^T \sigma_t^s(w_t^s(\Q,w)) + \trans{w}\EQt{-X}{t}},\\
\sigma_t^s(w_s) &= \essinf_{k \in \LdiK{s}{K_s}} \Et{\trans{w_s}k}{t}.
\end{align*}
If the market has proportional transaction costs only then $\sigma_t^s(w_s) = 0$ on $\{w_s \in K_s^+\}$ and $-\infty$ otherwise.

Finally, as it is well known that the superhedging risk measure is multiportfolio time consistent \cite{FR12,FR12b}, we are able to deduce a recursive formulation for the scalarizations:
\begin{align*}
shp_t^w(X) &= \esssup_{\Q \in \W_t^d(w)} \lrparen{\sigma_t^t(w) + \Et{shp_{t+1}^{w_t^{t+1}(\Q,w)}(X)}{t}}\\
&= \sigma_t^t(w) + \esssup_{\Q \in \W_t^d(w)} \Et{shp_{t+1}^{w_t^{t+1}(\Q,w)}(X)}{t}
\end{align*}
for any $X \in \LdiF{}$ and $w \in \LdoK{t}{\R^d_+ \backslash \{0\}}$.  In the proportional transaction costs case we recover $shp_t^w(X) = \esssup_{\Q \in \W_t^d(w)} \Et{shp_{t+1}^{w_t^{t+1}(\Q,w)}(X)}{t}$ if $w \in \LdoK{t}{K_t^+}$ and $-\infty$ otherwise.

\subsection{Composed risk measures}
\label{sec:composed}
Consider now the backwards composition of set-valued risk measures.  Such a composition, defined in \cite{FR12}, guarantees that the resultant risk measure is multiportfolio time consistent.  We define this backwards composition $\seq{\hat R}$ for a sequence of one-step risk measure $(R_{t,t+1})_{t = 0}^{T-1}$ and a terminal risk measure $R_T$ as:
\begin{align*}
\hat R_t(X) &:= R_{t,t+1}(\hat R_{t+1}(X)) = \bigcup_{Z \in \hat R_{t+1}(X)} R_{t,t+1}(-Z) \quad \forall t \in \{0,1,...,T-1\}\\
\hat R_T(X) &= R_T(X).
\end{align*}
As considered below, this backwards composition will be applied to the average value at risk (see also \cite{FR12,FR12b,FR14-alg}). This formulation could be applied, as well, to systemic risk measures in order to construct a multiportfolio time consistent risk measure as the dynamic version of set-valued systemic risk measures have only recently been introduced~\cite{doldi2020conditional}.

For any $w \in \LdpK{q}{}{\tilde R_T(0) \backslash \{0\}}$, the scalarization at the terminal time is given by $\hat \rho_T^w(X) = -\trans{w}X$.  At any prior time $t \in \{0,1,...,T-1\}$, the scalarizations of such a composition with respect to $w$ is given by
\begin{align*}
\hat \rho_t^w(X) &= \essinf\lrcurly{\trans{w}u \; | \; u \in \hat R_t(X)}\\
&= \esssup_{\Q \in \W_t^d(w)} \lrparen{-\sum_{s = t}^{T-1} \alpha_{s,s+1}(\Q,w_t^s(\Q,w)) - \alpha_T(\Q,w_t^T(\Q,w)) + \trans{w}\EQt{-X}{t}}\\
&= \esssup_{\Q \in \W_t^d(w)} \lrparen{-\alpha_{t,t+1}(\Q,w) + \Et{\hat \rho_{t+1}^{w_t^{t+1}(\Q,w)}(X)}{t}}.
\end{align*}
where the stepped penalty functions $\alpha_{s,s+1}$ are defined with respect to the original stepped acceptance sets $A_{s,s+1}$.
Recall that $\W_t^d$ is defined in Remark~\ref{rem_fullM}.  Thus we see that the composition of the scalarizations backwards in time requires knowledge of the entire family of scalarizations.

We will now consider the composed average value at risk to demonstrate the representations and recursive structure for a specific backwards composition.  
This risk measure has been studied in detail in \cite[Section 6.1]{FR12b}.
Briefly, fix some lower threshold $\epsilon > 0$; we will utilize the stepped average value at risk with levels $\lambda^t \in \LdiK{t}{[\epsilon,1]^d}$ for the stepped risk measure from time $t$ to $t+1$.  
The scalarization $\hat \rho_t^w$ of the composed average value at risk at time $t$ with normal direction $w$ has the dual representation:
\begin{align*}
\hat \rho_t^w(X) &= \esssup_{\Q \in \W_t^{AV@R}(w)} -\trans{w}\EQt{X}{t}\\
\W_t^{AV@R}(w) &= \lrcurly{\Q \in \mathcal{M}^d \; | \; 1_{\{w_i > 0\}} \bar\xi_{s,s+1}(\Q_i) \leq \frac{1}{\lambda_i^s} \; \forall i \in \{1,...,d\},\; s \in \{t,...,T-1\}}.
\end{align*}
By multiportfolio time consistency and Theorem~\ref{thm_scalar_recursive}, we recover the recursive representation:
\begin{align*}
\hat \rho_t^w(X) &= \esssup_{\Q \in \W_{t,t+1}^{AV@R}(w)} \Et{\hat \rho_{t+1}^{w_t^{t+1}(\Q,w)}(X)}{t}\\
\W_{t,t+1}^{AV@R}(w) &= \lrcurly{\Q \in \mathcal{M}^d \; | \; 1_{\{w_i > 0\}} \bar\xi_{t,t+1}(\Q_i) \leq \frac{1}{\lambda_i^t} \; \forall i = \{1,...,d\}}
\end{align*}
for any time $t = 0,1,...,T-1$.  See \cite[Section 2.3.1]{CK10} to compare the set of dual variables $\W_{t,t+1}^{AV@R}(w)$ and $\W_t^{AV@R}(w)$ to the univariate setting.

\section{Conclusion}
\label{sec:conclusion}
In this paper we consider the scalarizations of dynamic set-valued risk measures.  We first summarize results on the dual representations of these multivariate scalar risk measures that have been used previously in the literature.  We then introduce a new dual representation which we utilize to consider the time consistency of these scalarized risk measures.  The main results of this work are in determining a recursive relationship for the family of scalarizations that is equivalent to multiportfolio time consistency.  In particular, we find that this recursive structure explicitly demonstrates the moving scalarization inherent in the time consistency of many multi-objective and vector dynamic programming problems \cite{KMZ17,KR18}.  From this work and the aforementioned papers, the notion of moving scalarizations appears to be a fundamental one in Bellman's principle for multi-objective problems.  We conclude this work by considering examples to demonstrate the recursive formulation.

\appendix

\section{Proofs and auxiliary results for Section~\ref{sec:scalar}}
\subsection{Proofs of Section~\ref{sec:scalar}}
\label{sec:App4.1}
\begin{proof}[Proof of Proposition~\ref{prop_finite}.]
First we will consider the setting with a full space of eligible assets $M = \R^d$.
Fix $w \in \LdoK{t}{\tilde R_t(0)^+ \backslash \{0\}}$ and $X \in \LdiF{}$.  Define $\vec{1} := \transp{1,...,1} \in \R^d$, then by monotonicity, translativity, and normalization
\begin{align*}
\rho_t^w(X) &\leq \rho_t^w(-\|X\|_{\infty} \vec{1}) = \rho_t^w(0) + \|X\|_{\infty} \sum_{i = 1}^d w_i = \|X\|_{\infty} \sum_{i = 1}^d w_i \in \LoF{t}\\
\rho_t^w(X) &\geq \rho_t^w(\|X\|_{\infty} \vec{1}) = \rho_t^w(0) - \|X\|_{\infty} \sum_{i = 1}^d w_i = -\|X\|_{\infty} \sum_{i = 1}^d w_i \in \LoF{t}.
\end{align*}

Now consider the general eligible space $M$.  Fix $w \in \LdoK{t}{\tilde R_t(0)^+ \backslash \prp{M}}$.  
First, by \cite[Proposition 2.11(viii)]{FR12} the assumptions of this proposition imply that $R_t(X) \neq \emptyset$ for every $X \in \LdiF{}$.  Thus, by definition for any $X \in \LdiF{}$ and fixing any $u^*(X) \in R_t(X) \subseteq M_t$
\[\rho_t^{M,w}(X) = \essinf\lrcurly{\trans{w}u \; | \; u \in R_t(X)} \leq \trans{w}u^*(X) \in \LoF{t}.\]
Second, define the set-valued risk measure $\bar R_t: \LdiF{} \to \mathcal{P}(\LdiF{t};\LdiK{t}{\R^d_+})$ by 
\[\bar R_t(X) := \lrcurly{u \in \LdiK{t}{\R^m \times \R^{d-m}_+} \; | \; \transp{u_1,...,u_m,0,...,0} \in R_t(X)}.\]
By construction $\bar R_t$ is normalized, closed, and conditionally convex.  Additionally, the normal direction $w$ can be decomposed as $w = v + m_{\perp}$ for $v := \transp{w_1,...,w_m,0,...,0} \in \LdoK{t}{\tilde{\bar R}_t(0)^+ \backslash \{0\}}$ and $m_{\perp} := \transp{0,...,0,w_{m+1},...,w_d} \in \prp{M_t}$.
Thus, defining $\bar \rho_t^{\R^d,v}$ as the scalarization of $\bar R_t$ with full eligible space, we find
\[\rho_t^{M,w}(X) = \rho_t^{M,v}(X) \geq \bar \rho_t^{\R^d,v}(X) \in \LoF{t}.\]
\end{proof}

\begin{proof}[Proof of Proposition~\ref{prop_attained}.]
We will prove this result for the conditionally convex case for $\rho_t^w$ only.  The conditionally coherent and stepped cases follow identically with the representations given in Corollaries~\ref{cor_scalar-cc} and~\ref{cor_scalar-stepped}.

Define $\bar \rho_t^w: \LdpF{} \to \R$ by $\bar \rho_t^w(X) := \E{\rho_t^w(X)} = \inf_{u \in R_t(X)} \E{\trans{w}u}$ for any normal direction $w \in \LdpK{q}{t}{\tilde R_t(0)^+ \backslash \prp{M}}$ and any portfolio $X \in \LdpF{}$.  By \cite[Proposition A.1.3]{FR15-supermtg}, c.u.c.\ of the underlying set-valued risk measure $R_t$ implies $\bar \rho_t^w$ is lower semicontinuous.  We now wish to prove that $\bar \rho_t^w$ is upper semicontinuous as well.  To do so we will minimally modify the proof of \cite[Proposition 3.27]{HS12-uc} so as to only require h.l.c.\ and not full lower continuity.
For any $\epsilon > 0$, there exists some $u_{\epsilon} \in R_t(X)$ such that 
\[\E{\trans{w}u_{\epsilon}} < \inf_{u \in R_t(X)} \E{\trans{w}u} + \epsilon = \bar \rho_t^w(X) + \epsilon.\]  
This implies $V := \lrcurly{u \in M_t \; | \; \E{\trans{w}u} < \bar \rho_t^w(X) + \epsilon}$ is an open neighborhood of $u_{\epsilon}$.  Therefore, by h.l.c., $R_t^-[V]$ is an open neighborhood of $X$ and, by definition, $\bar \rho_t^w(Y) < \bar \rho_t^w(X) + \epsilon$ for any $Y \in R_t^-[V]$, i.e.\ upper semicontinuity.  

Therefore, by an application of Theorems 2.4.2(v) and 2.4.12 of \cite{Z02} to the dual representation given by \cite[Proposition A.1.1]{FR15-supermtg}, we find that
\[\bar \rho_t^w(X) = \max_{(\Q,m_{\perp}) \in \W_t(w)} \inf_{Z \in A_t} \E{\transp{w + m_{\perp}}\EQt{Z-X}{t}}\]
for any $w \in \LdpK{q}{t}{\tilde R_t(0)^+ \backslash \prp{M}}$ and $X \in \LdpF{}$.  Fixing some normal direction $w$ and portfolio $X$, let $(\Q^*,m_{\perp}^*)$ be maximizing arguments for the dual representation of $\bar \rho_t^w(X)$.  By the dual representation of $\rho_t^w(X)$, it follows that 
\[\rho_t^w(X) \geq -\alpha_t(\Q^*,w+m_{\perp}^*) + \transp{w+m_{\perp}^*}\EPt{\Q^*}{-X}{t}.\]
Finally, by construction of $\bar \rho_t^w(X) = \E{\rho_t^w(X)}$ and the $\Ft{t}$-decomposability of the acceptance set $A_t$ it follows that
\[\E{\rho_t^w(X)} = \E{-\alpha_t(\Q^*,w+m_{\perp}^*) + \transp{w+m_{\perp}^*}\EPt{\Q^*}{-X}{t}},\]
which implies that $(\Q^*,m_{\perp}^*)$ must also be maximizing arguments for the dual representation provided in Proposition~\ref{prop_scalar-cc}.
\end{proof}

\subsection{An auxiliary dual representation for scalarized risk measures}
\label{sec:dual-aux}

In this appendix we provide an auxiliary dual representation that splits the dual variables $(\Q,m_{\perp}) \in \W_t(w)$ into a stepped part from time $t$ to $s$ and a second set of dual variables that exists at time $s$.  This dual representation is used extensively in providing a time consistency relation in the Section~\ref{sec:mptc}. 
\begin{proposition}
\label{prop_extended_dual}
Let $0 \leq t \leq s \leq T$.  
Fix $w \in \LdpK{q}{t}{M_+^+ \backslash \prp{M}}$.  Then for any $X \in \LdpF{}$ it follows that
\begin{align}
\label{eq_extended_dual} \rho_t^w(X) &= \esssup_{(\Q,m_{\perp}) \in \W_{t,s}(w)} \esssup_{(\R,n_{\perp}) \in \W_s(w_t^s(\Q,w+m_{\perp}))} \lparen{-\beta_{t,s}^w(\Q,m_{\perp},\R,n_{\perp})}\\
\nonumber &\qquad \rparen{+ \Et{\trans{w_s^T(\R,w_t^s(\Q,w+m_{\perp})+n_{\perp})}(-X)}{t}}\\
\label{eq_extended_dual-2} &= \esssup_{(\Q,m_{\perp},\R,n_{\perp}) \in \tildeW_{t,s}(w)} \lparen{-\beta_{t,s}^w(\Q,m_{\perp},\R,n_{\perp})}\\
\nonumber &\qquad \rparen{+ \Et{\trans{w_s^T(\R,w_t^s(\Q,w+m_{\perp})+n_{\perp})}(-X)}{t}}
\end{align}
where
\[\beta_{t,s}^w(\Q,m_{\perp},\R,n_{\perp}) = \esssup_{Z \in A_t} \Et{\trans{w_s^T(\R,w_t^s(\Q,w+m_{\perp})+n_{\perp})}(-Z)}{t}\]
for any $(\Q,m_{\perp}) \in \W_t(w)$ and $(\R,n_{\perp}) \in \W_s(w_t^s(\Q,w+m_{\perp}))$,
and where 
\[\tildeW_{t,s}(w) = \lrcurly{(\Q,m_{\perp},\R,n_{\perp}) \; \left| \; \begin{array}{l} (\Q,m_{\perp}) \in \W_{t,s}(w),\\ (\R,n_{\perp}) \in \W_s(w_t^s(\Q,w+m_{\perp})),\\ \E{\beta_{t,s}^w(\Q,m_{\perp},\R,n_{\perp})} < +\infty \end{array}\right.}.\]
\end{proposition}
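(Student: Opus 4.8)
The plan is to reduce both displayed equalities to the convex dual representation of Proposition~\ref{prop_scalar-cc}, exploiting the multiplicative factorization of the density ratios recorded in Section~\ref{sec:prelim}. Writing $\EQt{-X}{t}=\Et{\diag{\xi_{t,T}(\Q)}(-X)}{t}$ componentwise, Proposition~\ref{prop_scalar-cc} becomes
\[\rho_t^w(X)=\esssup_{(\Q,m_\perp)\in\W_t(w)}\essinf_{Z\in A_t}\Et{\trans{w_t^T(\Q,w+m_\perp)}(Z-X)}{t},\]
and the key structural fact is that $\bar{\xi}_{t,T}(\Q)=\bar{\xi}_{t,s}(\Q)\bar{\xi}_{s,T}(\Q)$ yields the factorization $w_s^T(\Q,w_t^s(\Q,v))=w_t^T(\Q,v)$ for every direction $v$. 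I would prove the first equality \eqref{eq_extended_dual} by two inequalities and then obtain the second equality \eqref{eq_extended_dual-2} by showing that the integrability constraint $\E{\beta_{t,s}^w}<+\infty$ is harmless.

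For ``$\geq$'' in \eqref{eq_extended_dual} I restrict the outer double supremum to the diagonal sub-family $(\Q,m_\perp,\Q,0)$ with $(\Q,m_\perp)\in\W_t(w)$. Here $(\Q,m_\perp)\in\W_{t,s}(w)$ because $\W_t\subseteq\W_{t,s}$, and $(\Q,0)\in\W_s(w_t^s(\Q,w+m_\perp))$ because $w_t^s(\Q,w+m_\perp)\in M_{s,+}^+$ (as $(\Q,m_\perp)\in\W_{t,s}(w)$) while the factorization gives $w_s^T(\Q,w_t^s(\Q,w+m_\perp))=w_t^T(\Q,w+m_\perp)\in\LdpK{q}{}{\R^d_+}$; these are precisely the two conditions defining $\W_s$. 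On this sub-family $\beta_{t,s}^w$ collapses to $\alpha_t(\Q,w+m_\perp)$ and the reconstructed density collapses to $w_t^T(\Q,w+m_\perp)$, so each summand equals the corresponding term in Proposition~\ref{prop_scalar-cc}; taking the supremum already recovers $\rho_t^w(X)$.

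For ``$\leq$'' I fix an arbitrary admissible quadruple and set $D:=w_s^T(\R,w_t^s(\Q,w+m_\perp)+n_\perp)$, so that the summand equals $\essinf_{Z\in A_t}\Et{\trans{D}(Z-X)}{t}$. Choosing $Z=X+u$ for any $u\in R_t(X)$ (admissible since then $X+u\in A_t$) bounds this by $\Et{\trans{D}u}{t}$. The crux is the identity $\Et{D_i}{t}=w_i$ for $i\le m$: since $m_\perp,n_\perp\in\prp{M}$ vanish on the first $m$ coordinates, there $D_i=w_i\,\bar{\xi}_{t,s}(\Q_i)\bar{\xi}_{s,T}(\R_i)$, and two nested applications of the mean-one property $\Et{\bar{\xi}_{s,T}(\R_i)}{s}=\Et{\bar{\xi}_{t,s}(\Q_i)}{t}=1$ give $\Et{D_i}{t}=w_i$. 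As $u\in M_t$ is supported on the first $m$ coordinates, $\Et{\trans{D}u}{t}=\trans{w}u$, so the summand is at most $\trans{w}u$; taking the essential infimum over $u\in R_t(X)$ yields $\rho_t^w(X)$.

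The second equality \eqref{eq_extended_dual-2} amounts to collapsing the nested supremum into a single one over $\tildeW_{t,s}(w)$, i.e.\ to verifying that imposing $\E{\beta_{t,s}^w}<+\infty$ does not lower the value. The inequality ``$\le$'' is immediate from set inclusion. For ``$\ge$'' note that each summand equals $-\beta_{t,s}^w+\Et{\trans{D}(-X)}{t}$ with $D$ the corresponding reconstructed density, so along any family whose summands approach $\rho_t^w(X)$ one has $\beta_{t,s}^w=\Et{\trans{D}(-X)}{t}-(\text{summand})$; since $\rho_t^w(X)\in\LoF{t}$ by Assumption~\ref{ass:finite} and the conditional-expectation term is integrable, the penalty of a maximizing family is forced to be integrable, so such a family may be chosen inside $\tildeW_{t,s}(w)$. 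I expect this last step to be the main obstacle: without attainment (available only under h.l.c.) one must build the maximizing family explicitly, and here locality together with the $\Ft{t}$-decomposability of $A_t$ and of the dual sets is needed to paste approximate maximizers on $\Ft{t}$-measurable partitions while keeping the penalty integrable.
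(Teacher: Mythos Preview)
Your treatment of the first equality \eqref{eq_extended_dual} is correct and, for the inequality ``summand $\leq \rho_t^w(X)$'', actually more direct than the paper's. The paper fixes a quadruple, passes to expectations, uses $\Ft{t}$-decomposability of $A_t$ to interchange $\inf$ and $\E{\cdot}$, and only then exploits $\ERt{u}{s}=\EQt{u}{t}=u$ together with $\trans{m_\perp}u=\trans{n_\perp}u=0$ to collapse everything to $\E{\trans{w}u}$; the a.s.\ inequality is then recovered from the inequality in expectation. Your observation that $\Et{D_i}{t}=w_i$ for $i\leq m$ (via the mean-one property of the density ratios) gives $\Et{\trans{D}u}{t}=\trans{w}u$ for $u\in M_t$ directly, so the summand is bounded by $\trans{w}u$ a.s.\ for every $u\in R_t(X)$, and the decomposability interchange is avoided entirely. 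The diagonal restriction $(\Q,m_\perp,\Q,0)$ for the other inequality is identical to the paper's.

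For the second equality \eqref{eq_extended_dual-2}, however, your argument has a genuine gap that you correctly flag but do not close. Writing $\beta=C-(\text{summand})$ with $C\in\LoF{t}$ only shows that \emph{if} the summand is integrable then so is $\beta$; an approximating family for an essential supremum need not consist of integrable summands (you know $S_n\leq\rho_t^w(X)\in\LoF{t}$, but have no control on $S_n^-$), so the conclusion that a maximizing family can be taken in $\tildeW_{t,s}(w)$ does not follow. The paper sidesteps this by working in expectation throughout: using $\Ft{t}$-decomposability to interchange $\esssup$ and $\E{\cdot}$, one gets $\E{\rho_t^w(X)}=\sup_{(\Q,m_\perp)}\sup_{(\R,n_\perp)}\E{\text{summand}}$, and at the level of real-valued suprema any quadruple with $\E{\beta}=+\infty$ contributes $-\infty$ (since the other term is integrable) and drops out automatically, yielding $\sup_{\tildeW_{t,s}(w)}\E{\text{summand}}$; interchanging back gives the result. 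Your pasting idea with decomposability is in the right spirit, but the clean route is this passage through expectations.
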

\begin{proof}
First we will show \eqref{eq_extended_dual}.
To do this, first we will show $\leq$. This trivially follows from Proposition~\ref{prop_scalar-cc} by
\begin{align*}
\rho_t^w(X) &= \esssup_{(\Q,m_{\perp}) \in \W_t(w)} \lrparen{-\alpha_t(\Q,w+m_{\perp}) + \Et{\trans{w_t^T(\Q,w+m_{\perp})}(-X)}{t}}\\
&= \esssup_{(\Q,m_{\perp}) \in \W_t(w)} \lrparen{-\beta_{t,s}^w(\Q,m_{\perp},\Q,0) + \Et{\trans{w_s^T(\Q,w_t^s(\Q,w+m_{\perp}))}(-X)}{t}}\\
&\leq \esssup_{(\Q,m_{\perp}) \in \W_{t,s}(w)} \esssup_{(\R,n_{\perp}) \in \W_s(w_t^s(\Q,w+m_{\perp}))} \lparen{-\beta_{t,s}^w(\Q,m_{\perp},\R,n_{\perp})}\\
&\quad\quad \rparen{+ \Et{\trans{w_s^T(\R,w_t^s(\Q,w+m_{\perp})+n_{\perp})}(-X)}{t}}.
\end{align*}
Now, to demonstrate $\geq$, we will show the inequality for the expectations.  Let $(\Q,m_{\perp}) \in \W_{t,s}(w)$ and let $(\R,n_{\perp}) \in \W_s(w_t^s(\Q,w+m_{\perp}))$.  It follows that
\begin{align}
\nonumber &\E{-\beta_{t,s}^w(\Q,m_{\perp},\R,n_{\perp}) + \Et{\trans{w_s^T(\R,w_t^s(\Q,w+m_{\perp})+n_{\perp})}(-X)}{t}}\\
\nonumber &= \E{\essinf_{Z \in A_t} \Et{\trans{w_s^T(\R,w_t^s(\Q,w+m_{\perp})+n_{\perp})}Z}{t}}\\
\nonumber &\qquad + \E{\Et{\trans{w_s^T(\R,w_t^s(\Q,w+m_{\perp})+n_{\perp})}(-X)}{t}}\\
\label{eq:dual-1} &= \inf_{Z \in A_t} \E{\trans{w_s^T(\R,w_t^s(\Q,w+m_{\perp})+n_{\perp})}(Z-X)}\\
\label{eq:dual-2} &\leq \inf_{u \in R_t(X)} \E{\trans{w_s^T(\R,w_t^s(\Q,w+m_{\perp})+n_{\perp})}u}\\
\nonumber &= \inf_{u \in R_t(X)} \E{\transp{w_t^s(\Q,w+m_{\perp})+n_{\perp}}\ERt{u}{s}}\\
\nonumber &= \inf_{u \in R_t(X)} \E{\transp{w+m_{\perp}}\EQt{u}{t}}\\
\label{eq:dual-3} & = \inf_{u \in R_t(X)} \E{\trans{w}u} = \E{\rho_t^w(X)}.
\end{align}
The inequality in \eqref{eq:dual-2} follows from the primal representation of the set-valued risk measure $R_t$.  The remaining lines follow from $\ERt{u}{s} = \EQt{u}{t} = u$ and $\E{\trans{m_{\perp}}u} = \E{\trans{n_{\perp}}u} = 0$.  It remains to show that we are able to interchange the expectation and the infimum above in \eqref{eq:dual-1} and \eqref{eq:dual-3} due to the $\Ft{t}$-decomposability of $A_t$.  For the terminal interchange in \eqref{eq:dual-3} this is trivial.  For \eqref{eq:dual-1} we demonstrate this by showing that $\lrcurly{\Et{\trans{w_s^T(\R,w_t^s(\Q,w+m_{\perp})+n_{\perp})}Z}{t} \; | \; Z \in A_t}$ is $\Ft{t}$-decomposable.

Let $u_X,u_Y \in \lrcurly{\Et{\trans{w_s^T(\R,w_t^s(\Q,w+m_{\perp})+n_{\perp})}Z}{t} \; | \; Z \in A_t}$ such that $u_X = \Et{\trans{w_s^T(\R,w_t^s(\Q,w+m_{\perp})+n_{\perp})}X}{t}$ and $u_Y = \Et{\trans{w_s^T(\R,w_t^s(\Q,w+m_{\perp})+n_{\perp})}Y}{t}$.  Let $D \in \Ft{t}$, then $1_D X + 1_{D^c} Y \in A_t$, and thus
\begin{align*}
1_D u_X + 1_{D^c} u_Y &= \Et{\trans{w_s^T(\R,w_t^s(\Q,w+m_{\perp})+n_{\perp})}(1_D X + 1_{D^c}Y)}{t}\\
&\in \lrcurly{\Et{\trans{w_s^T(\R,w_t^s(\Q,w+m_{\perp})+n_{\perp})}Z}{t} \; | \; Z \in A_t}.
\end{align*}

To complete the proof we will demonstrate that \eqref{eq_extended_dual} is equivalent to \eqref{eq_extended_dual-2} in much the same manner.
Since $(\Q,m_{\perp},\R,n_{\perp}) \in \tildeW_{t,s}(w)$ implies that $(\Q,m_{\perp}) \in \W_{t,s}(w)$ and $(\R,n_{\perp}) \in \W_s(w_t^s(\Q,w+m_{\perp}))$, then it immediately follows that 
\[\rho_t^w(X) \geq \esssup_{(\Q,m_{\perp},\R,n_{\perp}) \in \tildeW_{t,s}(w)} \lrparen{\begin{array}{l}  -\beta_{t,s}^w(\Q,m_{\perp},\R,n_{\perp})\\ \qquad + \Et{\trans{w_s^T(\R,w_t^s(\Q,w+m_{\perp})+n_{\perp})}(-X)}{t} \end{array}}\] 
by~\eqref{eq_extended_dual}.  Therefore it suffices to show the equivalence of the expectations of \eqref{eq_extended_dual} and \eqref{eq_extended_dual-2} in order to prove the desired property.  Beginning with~\eqref{eq_extended_dual}:
\begin{align*}
&\E{\rho_t^w(X)} = \E{\begin{array}{r}\esssup_{(\Q,m_{\perp}) \in \W_{t,s}(w)} \esssup_{(\R,n_{\perp}) \in \W_s(w_t^s(\Q,w+m_{\perp}))} \lparen{-\beta_{t,s}^w(\Q,m_{\perp},\R,n_{\perp})}\\ \rparen{+ \Et{\trans{w_s^T(\R,w_t^s(\Q,w+m_{\perp})+n_{\perp})}(-X)}{t}}\end{array}}\\
&= \sup_{(\Q,m_{\perp}) \in \W_{t,s}(w)} \sup_{(\R,n_{\perp}) \in \W_s(w_t^s(\Q,w+m_{\perp}))} \E{\begin{array}{l}-\beta_{t,s}^w(\Q,m_{\perp},\R,n_{\perp})\\ + \Et{\trans{w_s^T(\R,w_t^s(\Q,w+m_{\perp})+n_{\perp})}(-X)}{t}\end{array}}\\
&= \sup_{(\Q,m_{\perp}) \in \W_{t,s}(w)} \sup_{(\R,n_{\perp}) \in \W_s(w_t^s(\Q,w+m_{\perp}))} \lrparen{\begin{array}{l}\E{-\beta_{t,s}^w(\Q,m_{\perp},\R,n_{\perp})}\\ + \E{\trans{w_s^T(\R,w_t^s(\Q,w+m_{\perp})+n_{\perp})}(-X)}\end{array}}\\
&= \sup_{(\Q,m_{\perp},\R,n_{\perp}) \in \tildeW_{t,s}(w)} \E{\begin{array}{l}-\beta_{t,s}^w(\Q,m_{\perp},\R,n_{\perp})\\ + \Et{\trans{w_s^T(\R,w_t^s(\Q,w+m_{\perp})+n_{\perp})}(-X)}{t}\end{array}}\\
&= \E{\esssup_{(\Q,m_{\perp},\R,n_{\perp}) \in \tildeW_{t,s}(w)} \lrparen{\begin{array}{l}-\beta_{t,s}^w(\Q,m_{\perp},\R,n_{\perp})\\ + \Et{\trans{w_s^T(\R,w_t^s(\Q,w+m_{\perp})+n_{\perp})}(-X)}{t}\end{array}}}.
\end{align*}
The second and last equalities follow from $\Ft{t}$-decomposable.
\end{proof}

\begin{remark}\label{rem:extended_dual}
If all assets are eligible, $M = \R^d$, then~\eqref{eq_extended_dual} and~\eqref{eq_extended_dual-2} follow immediately and are trivially equivalent to the duality result in Proposition~\ref{prop_scalar-cc} by considering the probability measure $\S \in \mathcal{M}^d$ such that $\frac{d\S_i}{d\P} = \bar\xi_{0,s}(\Q_i)\bar\xi_{s,T}(\R_i)$ for all $i$ and, conversely, $\dQdP = \xi_{0,s}(\S)$ and $\dRdP = \xi_{s,T}(\S)$.
\end{remark}

\section{Proof of Lemma~\ref{lemma_scalar}}
\label{sec:App5}

\begin{proof}[Proof of Lemma~\ref{lemma_scalar}.]
We will prove this result by demonstrating that \eqref{thm_scalar-penalty} $\Rightarrow$ \eqref{thm_scalar-recursive} $\Rightarrow$ \eqref{thm_scalar-acceptance} $\Rightarrow$ \eqref{thm_scalar-penalty}.  Following this we will provide notes on any differences in the proof that are utilized when considering the opposite orderings that are stated in parentheses in the three assertions of Lemma~\ref{lemma_scalar}.
\begin{description}
\item[\eqref{thm_scalar-penalty} $\Rightarrow$ \eqref{thm_scalar-recursive}:]
First, consider the representation of $\rho_t^w$ for any $w \in \LdpK{q}{t}{\tilde R_t(0)^+ \backslash \prp{M}}$ given by~\eqref{eq_extended_dual}.  For any $X \in \LdpF{}$:
\begin{align*}
&\rho_t^w(X) = \esssup_{(\Q,m_{\perp}) \in \W_{t,s}(w)} \esssup_{(\R,n_{\perp}) \in \W_s(w_t^s(\Q,w+m_{\perp}))} \lparen{-\beta_{t,s}^w(\Q,m_{\perp},\R,n_{\perp})}\\
&\qquad \rparen{+ \Et{\trans{w_s^T(\R,w_t^s(\Q,w+m_{\perp})+n_{\perp})}(-X)}{t}}\\
&\geq \esssup_{\substack{(\Q,m_{\perp}) \in \W_{t,s}(w)\\ w_t^s(\Q,w+m_{\perp}) \in R_s(0)^+}} \esssup_{(\R,n_{\perp}) \in \W_s(w_t^s(\Q,w+m_{\perp}))} \lparen{-\beta_{t,s}^w(\Q,m_{\perp},\R,n_{\perp})}\\
&\qquad \rparen{+ \Et{\trans{w_s^T(\R,w_t^s(\Q,w+m_{\perp})+n_{\perp})}(-X)}{t}}\\
&\geq \esssup_{\substack{(\Q,m_{\perp}) \in \W_{t,s}(w)\\ w_t^s(\Q,w+m_{\perp}) \in R_s(0)^+}} \esssup_{(\R,n_{\perp}) \in \W_s(w_t^s(\Q,w+m_{\perp}))}
\lparen{-\alpha_{t,s}(\Q,w+m_{\perp})}\\
&\qquad \rparen{+ \Et{-\alpha_s(\R,w_t^s(\Q,w+m_{\perp})+n_{\perp}) + \transp{w_t^s(\Q,m_{\perp})+n_{\perp}}\ERt{-X}{s}}{t}}\\
&= \esssup_{\substack{(\Q,m_{\perp}) \in \W_{t,s}(w)\\ w_t^s(\Q,w+m_{\perp}) \in R_s(0)^+}} \lparen{-\alpha_{t,s}(\Q,w+m_{\perp})}\\
&\qquad\rparen{+ \Et{\esssup_{(\R,n_{\perp}) \in \W_s(w_t^s(\Q,w+m_{\perp}))} \lrparen{\begin{array}{l}-\alpha_s(\R,w_t^s(\Q,w+m_{\perp})+n_{\perp})\\ + \transp{w_t^s(\Q,w+m_{\perp})+n_{\perp}}\ERt{-X}{s}\end{array}}}{t}}\\
&= \esssup_{\substack{(\Q,m_{\perp}) \in \W_{t,s}(w)\\ w_t^s(\Q,w+m_{\perp}) \in R_s(0)^+}} \lrparen{-\alpha_{t,s}(\Q,w+m_{\perp}) +
\Et{\rho_s^{w_t^s(\Q,w+m_{\perp})}(X)}{t}}.
\end{align*}
We are able to move the essential supremum inside the conditional expectation since
\[\lrcurly{\begin{array}{l} -\alpha_s(\R,w_t^s(\Q,w+m_{\perp})+n_{\perp})\\ \quad + \transp{w_t^s(\Q,w+m_{\perp})+n_{\perp}}\ERt{-X}{s} \end{array} \; | \; (\R,n_{\perp}) \in \W_s(w_t^s(\Q,w+m_{\perp}))}\] 
is $\Ft{s}$-decomposable.

\item[\eqref{thm_scalar-recursive} $\Rightarrow$ \eqref{thm_scalar-acceptance}:]
Let $X \in A_t$ and assume $X \not\in A_{t,s} + A_s$ (which is closed and convex by Assumption~\ref{ass:convex}).  Then there exists a $Y \in \LdqF{}$ such that $\E{\trans{Y}X} < \inf_{Z \in A_{t,s} + A_s} \E{\trans{Y}Z}$.  By the monotonicity property of acceptance sets, it follows that $Y \in \LdpK{q}{}{\R^d_+}$ and, in fact, there exists $(\Q,w+m_{\perp}) \in \W_t \subseteq \W_{t,s}$ such that $Y = w_t^T(\Q,w+m_{\perp})$.  
By the appropriate decomposability of the acceptance sets, we find that
\begin{align*}
\inf_{Z \in A_{t,s} + A_s} &\E{\trans{w_t^T(\Q,w+m_{\perp})}Z} \\
&= \E{-\alpha_{t,s}(\Q,w+m_{\perp})} + \E{-\alpha_s(\Q,w_t^s(\Q,w+m_{\perp}))}.
\end{align*}
In particular, this implies that either $w \in R_t(0)^+$ or $\E{-\alpha_{t,s}(\Q,w+m_{\perp})} \leq \inf_{u \in R_t(0)} \E{\trans{w}u} = -\infty$ which is a contradiction to the strict inequality with $\E{\trans{w_t^T(\Q,w+m_{\perp})}X}$ constructed initially.  Similarly, we can conclude that $w_t^s(\Q,w+m_{\perp}) \in R_s(0)^+$.
Then it follows that
\begin{align*}
0 &= \E{\alpha_{t,s}(\Q,w+m_{\perp}) + \alpha_s(\Q,w_t^s(\Q,w+m_{\perp}))}\\
&\qquad - \E{\alpha_{t,s}(\Q,w+m_{\perp}) + \alpha_s(\Q,w_t^s(\Q,w+m_{\perp}))}\\
&< -\E{\transp{w+m_{\perp}}\EQt{X}{t}} - \E{\alpha_{t,s}(\Q,w+m_{\perp}) + \alpha_s(\Q,w_t^s(\Q,w+m_{\perp}))}\\
&= \E{-\alpha_s(\Q,w_t^s(\Q,w+m_{\perp})) + \trans{w_t^s(\Q,w+m_{\perp})}\EQt{-X}{s}}\\ 
&\qquad - \E{\alpha_{t,s}(\Q,w+m_{\perp})}\\
&\leq \E{\Et{\rho_s^{w_t^s(\Q,w+m_{\perp})}(X)}{t} - \alpha_{t,s}(\Q,w+m_{\perp})}\\
&\leq \E{\rho_t^w(X)} \leq 0.
\end{align*}
This produces a contradiction, and thus $X \in A_{t,s} + A_s$.  Note that the last step uses $X \in A_t$ if
and only if $\rho_t^w(X) \leq 0$ for every $w \in \LdpK{q}{t}{\tilde R_t(0)^+ \backslash \prp{M}}$.  

\item[\eqref{thm_scalar-acceptance} $\Rightarrow$ \eqref{thm_scalar-penalty}:]
Let $w \in \LdpK{q}{t}{\tilde R_t(0)^+ \backslash \prp{M}}$, 
$(\Q,m_{\perp}) \in \W_{t,s}(w)$, and $(\R,n_{\perp}) \in \W_s(w_t^s(\Q,w+m_{\perp}))$
\begin{align*}
\beta_{t,s}^w(\Q,m_{\perp},\R,n_{\perp}) &= \esssup_{Z_t \in A_t} \Et{\trans{w_s^T(\R,w_t^s(\Q,w+m_{\perp})+n_{\perp})}(-Z_t)}{t}\\
&\leq \esssup_{Z_{t,s} \in A_{t,s}} \esssup_{Z_s \in A_s} \Et{\trans{w_s^T(\R,w_t^s(\Q,w+m_{\perp})+n_{\perp})}(-Z_{t,s}-Z_s)}{t}\\
&= \esssup_{Z_s \in A_{t,s}} \Et{\trans{w_s^T(\R,w_t^s(\Q,w+m_{\perp})+n_{\perp})}(-Z_{t,s})}{t}\\
&\quad\quad + \esssup_{Z_s \in A_s} \Et{\trans{w_s^T(\R,w_t^s(\Q,w+m_{\perp})+n_{\perp})}(-Z_s)}{t}\\
&= \esssup_{Z_{t,s} \in A_{t,s}} \transp{w+m_{\perp}}\EQt{-Z_{t,s}}{t}\\
&\quad\quad + \esssup_{Z_s \in A_s} \Et{\transp{w_t^s(\Q,w+m_{\perp})+n_{\perp}}\ERt{Z_s}{s}}{t}\\
&= \alpha_{t,s}(\Q,w+m_{\perp}) + \Et{\alpha_s(\R,w_t^s(\Q,w+m_{\perp})+n_{\perp})}{t}
\end{align*}
Note that $\prp{M_s} = \LdpK{q}{s}{\prp{M}}$ for all times $s$, thus $\trans{n_{\perp}}Z_{t,s} = 0$ almost surely for every $Z_{t,s} \in M_s$.  Additionally, in the last line we can interchange the essential supremum and the conditional expectation by the $\Ft{s}$-decomposability of $A_s$.
\end{description}
For the opposite orderings: 
\begin{itemize}
\item In the implication from \eqref{thm_scalar-penalty} to \eqref{thm_scalar-recursive} we take advantage of 
the inequality on the penalty functions 
\[\E{\beta_{t,s}^w(\Q,m_{\perp},\R,n_{\perp})} \geq \E{\alpha_{t,s}(\Q,w+m_{\perp}) + \alpha_s(\R,w_t^s(\Q,w+m_{\perp})+n_{\perp})}\] 
and that, by definition,
\[\E{\alpha_s(\R,w_t^s(\Q,w+m_{\perp})+n_{\perp})} \geq \sup_{u \in R_s(0)} \E{\trans{w_t^s(\Q,w+m_{\perp})}(-u)} = +\infty\] 
if $w_t^s(\Q,w+m_{\perp}) \not\in R_s(0)^+$.
Thus by considering both~\eqref{eq_extended_dual} and~\eqref{eq_extended_dual-2}, we find
\begin{align*}
\rho_t^w(X) &= \esssup_{\substack{(\Q,m_{\perp}) \in \W_{t,s}(w)\\ w_t^s(\Q,w+m_{\perp}) \in R_s(0)^+}} \esssup_{(\R,n_{\perp}) \in \W_s(w_t^s(\Q,w+m_{\perp}))} \lparen{-\beta_{t,s}^w(\Q,m_{\perp},\R,n_{\perp})}\\
&\qquad\qquad \rparen{+ \Et{\trans{w_s^T(\R,w_t^s(\Q,w+m_{\perp})+n_{\perp})}(-X)}{t}}
\end{align*}
for every $X \in \LdpF{}$ and $w \in \LdpK{q}{t}{\tilde R_t(0)^+ \backslash \prp{M}}$.

\item In the implication from \eqref{thm_scalar-recursive} to \eqref{thm_scalar-acceptance} we take advantage of Lemma~\ref{prop_mptc_acceptance}.  This implies that $0 \geq -\alpha_{t,s}(\Q,w+m_{\perp}) + \Et{\rho_s^{w_t^s(\Q,w+m_{\perp})}(X)}{t}$ for every $X \in A_{t,s} + A_s$, $(\Q,m_{\perp}) \in \W_{t,s}(w)$, $w \in \LdpK{q}{t}{\tilde R_t(0)^+ \backslash \prp{M}}$, and $w_t^s(\Q,w+m_{\perp}) \in R_s(0)^+$.  Thus, it follows that 
\[0 \geq \esssup_{\substack{(\Q,m_{\perp}) \in \W_{t,s}(w)\\ w_t^s(\Q,w+m_{\perp}) \in R_s(0)^+}} \lrparen{-\alpha_{t,s}(\Q,w+m_{\perp}) + \Et{\rho_s^{w_t^s(\Q,w+m_{\perp})}(X)}{t}} \geq \rho_t^w(X)\] 
for every $w \in \LdpK{q}{t}{\tilde R_t(0)^+ \backslash \prp{M}}$.  And this is true if and only if $X \in A_t$.
\end{itemize}
\end{proof}

\begin{lemma}
\label{prop_mptc_acceptance}
Let $0 \leq t < s \leq T$.  If $X \in A_{t,s} + A_s$ then $-\rho_s^{w_s}(X) \geq \essinf_{Z \in A_{t,s}} \trans{w_s}Z$ for every $w_s \in \LdpK{q}{s}{M_+^+ \backslash \prp{M}}$.
\end{lemma}
\begin{proof}
Assume $X \in A_{t,s} + A_s$.  Then define $X_{t,s} \in A_{t,s}$ and $X_s \in A_s$ such that $X = X_{t,s} + X_s$.  It immediately follows that
\begin{align*}
-\rho_s^{w_s}(X) &= -\rho_s^{w_s}(X_s) + \trans{w_s}X_{t,s} \geq \trans{w_s}X_{t,s} \geq \essinf_{Z \in A_{t,s}} \trans{w_s}Z.
\end{align*}
\end{proof}

\bibliographystyle{plain}
\bibliography{biblio}
\end{document}